\def\>{\rangle}
\def\<{\langle}
\DeclareMathOperator{\erf}{erf}
\begin{document}

\title{The unphysicality of Hilbert spaces}
\author{Gabriele Carcassi}
\affiliation{Physics Department, University of Michigan, Ann Arbor, MI 48109}
\author{Francisco Calder\'on}
\affiliation{Philosophy Department, University of Michigan, Ann Arbor, MI 48109}
\author{Christine A. Aidala}
\affiliation{Physics Department, University of Michigan, Ann Arbor, MI 48109}

\date{\today}

\begin{abstract}
We argue that Hilbert spaces are not suitable to represent quantum states mathematically, in the sense that they require properties that are untenable by physical entities. We first demonstrate that the requirements posited by complex inner product spaces are physically justified. We then show that completeness in the infinite-dimensional case requires the inclusion of states with infinite expectations, coordinate transformations that take finite expectations to infinite ones and vice versa, and time evolutions that transform finite expectations to infinite ones in finite time. This means we should be wary of using Hilbert spaces to represent quantum states as they turn a potential infinity into an actual infinity. The main point of the paper, then, is to raise awareness that results that rely on the completeness of Hilbert spaces may not be physically significant. While we do not claim to know what a physically more appropriate closure should be, we note that Schwartz spaces, among other things, guarantee that the expectations of all polynomials of position and momentum are finite, their elements are uniquely identified by these expectations, and they are closed under Fourier transform.  

\end{abstract}

\maketitle

\section{Introduction}

Hilbert spaces have been the cornerstone of quantum mechanics since von Neumann's 1932 book~\cite{von_neumann_mathematische_1996}. While the formalism is mathematically consistent and most accept it without question, it does not have a clear physical justification~\cite{heathcote_1990, hardy_2001}. In fact, von Neumann himself expressed his dissatisfaction not long after publication of his book, and spent considerable effort looking for alternatives~\cite{vonNeumannHilbert_1996}.

In an effort to find a precise physical justification, we realized that no such justification can exist as infinite-dimensional Hilbert spaces always contain elements that cannot have physical counterparts. To put it simply, they must include states that have infinite expectation values for position, momentum, energy and so on. This means that a truly general statement in quantum theory must be valid also on these irrelevant unphysical cases, with the consequence that either physical results are left out, or they are obscured in a complicated patchwork of theorems with different domains of validity. We therefore suspect that some of the mathematical difficulties encountered when trying to give a more precise mathematical foundation to quantum theory stem from this fundamental mismatch, and we argue more generally that mathematical spaces used in physics should already come equipped with the proper structure that excludes physically pathological behavior.

In this work we briefly present our findings, leaving more mathematical detail to the appendix. We will see that complex inner product spaces are indeed needed, and should be considered physical. We will see that the problem lies in requiring completeness, which should be understood as an unphysical mathematical requirement. We will show that other potential options exist, offering Schwartz spaces as an example. These guarantee the expectation values of all polynomials of position and momentum to be finite, and they are also closed under Fourier transform.

\section{On physicality}

When we first drafted this paper, we naively thought that the advantage of having a well-posed mathematical structure that represents, and only represents, physically well-defined objects would be self-evident. It does not seem to be the case: a reviewer from a previous journal submission stated that ``the question of whether the limiting states are actually physical is rather pointless;'' conversely, others see no problem in working with quantum field theory path integrals that are not mathematically well-defined. We are therefore in the absurd, from our perspective, position of having to defend that, yes, the math we need to use in physics has to be mathematically precise and, yes, the math we need to use in physics has to represent---perhaps idealized---physically realizable objects. The cost of not requiring a clear and tight connection between the math and the physics is all around us: an ever-increasing number of interpretations of quantum mechanics, abstract mathematical work whose connection to physics is unclear, theoretical physics more and more disconnected from experimental physics. The reader who finds this obvious can skip this section.

One problem is that most physicists see mathematics as just a tool for calculation, much like oscilloscopes or telescopes are tools to gather data. As long as they do what they are supposed to do, the details are irrelevant because the physics lies somewhere else. If they do not work correctly, it is just a detail for the expert to fix. But mathematics plays a much more fundamental role in physics: it is the language in which we specify our models. When we say, ``States are rays in a Hilbert space,'' we are providing a specification for a physical system.  We are saying that there is a particular physical system which under certain circumstances can be mathematically modeled in a particular way.  So, if Hilbert spaces do not work, it is not simply a math problem: the model is wrong, which is a physics problem.  This is the question we are examining in this paper: can the math we use in quantum mechanics model a particular class of physical systems correctly?  Is our physical problem well specified?

To make this determination, one needs a standard to judge the physicality of a formal definition~\cite{redei_tension_2020, north_physics_2021} and, rather tellingly, physics tradition does not provide one. Here we describe the standard that we follow in our work~\cite{aop-book}.  Suppose we are studying a physical system. A mathematical definition is \textbf{physical} if it properly characterizes the physical system in the given conditions. That is, if we are able to \textbf{justify on physical grounds} that the given mathematical definition is needed to capture and only capture a particular aspect of the physical system under the given conditions. On the other hand, a mathematical definition is \textbf{unphysical} if it can be shown to require properties or operations that cannot have a physical counterpart. The question of this paper, then, is whether the use of Hilbert spaces to characterize quantum systems can be justified on physical grounds, from physical requirements. This is the task at hand.

While it seems obvious to us, it may be worth stating that \textbf{it is not up to mathematicians to justify the use of a particular mathematical structure in physics}. Ultimately, the justification cannot be solely in terms of mathematical matters, but rather it has to rely on that elusive ``physical intuition.'' In fact, the task is exactly to make that elusive ``physical intuition'' precise enough to turn it into mathematical definitions. This is the job of a physicist. In our specific case, the core argument we will use is that states described by measurable quantities with infinite or ill-defined expectation values are not physically realizable. To us, and to every experimental physicist we talked to, ``It doesn't make sense experimentally,'' and this is the best argument possible to show that something is unphysical. ``It can't be done, not even in principle.'' 

If the problem of justifying the use of mathematical structures in physics is, ultimately, a physical problem, it also follows that it cannot be solved simply by a different mathematical approach. For example, rigged Hilbert spaces were developed to give a mathematical foundation to Dirac’s bra-ket formalism~\cite{Dirac1947}, and to be able to talk about eigenstates of position and momentum, working mathematically with both continuous and discrete bases on an interchangeable level~\cite{Madrid_2005,Celeghini2016}. The question is not whether these can be useful at a mathematical level to solve certain problems. They clearly are. The question at hand is whether, on physical grounds, this is a judicious idea. Should we really treat continuous and discrete spectra the same, or should we consider that they are, on physical grounds, two rather different things? Should the fact that the energy spectrum is typically discrete for bound states and continuous for unbound states make us wonder if there is something in the differing topologies of the spectra that is physically important? Should we then make the difference more prominent in our thinking, rather than trying to construct mathematical tools that treat them the same way?  Similarly, one can take the algebraic route and start with an algebra of observables. This, some argue, makes some mathematical problems easier to solve, which indeed it does. It still begs the question: why is it that, on physical grounds, observables should form a $C^*$-algebra? What, then, is the suitable physically meaningful state space? One of the ``successes'' of the algebraic approach was to construct a Hilbert space from an abstract $C^*$-algebra via the so-called GNS construction. Thus in both these frameworks, rigged Hilbert spaces and the $C^*$-algebra approach, Hilbert spaces still play a crucial role. If Hilbert spaces are physically problematic, these approaches are still relying on a mathematical structure that is physically problematic.  Looking to a completely different field, within the hierarchy of safety controls, elimination, substitution, and engineering controls are strongly preferred over administrative controls to manage hazards---putting up a fence at the edge of a cliff is much safer than simply putting a sign saying, ``Beware of cliff.''  By using Hilbert spaces for physical calculations in quantum mechanics, the community is effectively relying on ``administrative controls'' to avoid incorrect results, i.e.~considering if a mathematically allowed calculation makes physical sense and discarding it if not.  For physics problems in realms close to everyday human experience, our physical intuition is very reliable.  However, for problems within e.g.~quantum physics, quantum field theory, or cosmology, our physical intuition is much less powerful, and in fact these areas of physics rely much more heavily on the mathematics.  Our paper is effectively advocating that we try to implement ``engineering controls'' to ensure the soundness of calculations in quantum mechanics---if we can find a suitable mathematical space that does not allow unphysical objects in the first place, then there is no risk of accidentally failing to discard unphysical results. 

While the argument is ultimately physical, and it is therefore the job of a physicist to make these arguments, it has to be a physicist that is well-versed in foundational issues in mathematics. The justification, in fact, will need precise understanding of the mathematical definitions and their implications for both the physics and the math. It is often the case that the mathematical detail informs the physics and should not be swept under the rug.  Let us give an example. In a previous work some of us found that a set of physically distinguishable objects must be a $T_0$ second-countable topological space where each open set represents a verifiable statement, a statement allowing a test that terminates in finite time if and only if the statement is true~\cite{aop-book}. This result tells us why functions in physics must be ``well-behaved'' (i.e.~topologically continuous in the right topology), why sets with cardinality greater than the continuum are unphysical (i.e.~they cannot be given a $T_0$ second-countable topology) and why the Banach-Tarski paradox does not apply in physics (i.e.~non-Borel sets are unphysical as they cannot be associated with an experimental test). This tells us that, at the very least, non-separable Hilbert spaces are unphysical as they have cardinality greater than the continuum~\cite{Earman2020}. This ties together ideas from point-set topology, logic, set theory, measure theory and computer science. Clearly, one needs to have at least a basic understanding of these fields to be able to follow the argument, and most degree programs in physics do not typically discuss these topics in depth.\footnote{A previous reviewer conflated the completeness of the Hilbert space with the existence of a complete orthonormal basis of said space.} However, in the end, only a physicist can judge whether the evidence-based requirement of physics justifies the mathematical requirements.

To summarize, we are not interested here in understanding whether Hilbert spaces have nice mathematical features for calculating or proving results. Moreover, the fact that other tools have been developed to solve other mathematical problems is not relevant to this discussion. We are only interested in the question of mathematical representation of quantum states, which are physical entities we prepare in a lab. States in physics are the output of a preparation procedure, a characterization that works for both ``pure states'' of mechanics as well as ensembles in statistical mechanics. They are what we produce, manipulate and, in the end, measure. What physical requirements are we capturing when we model them mathematically? Are Hilbert spaces capturing, and only capturing, physically tenable assumptions, or not?  

\section{On the physicality of Hilbert spaces}

As we have discussed in the previous section, to argue for the physicality of a mathematical definition we need to make a justification \textbf{on physical grounds} that the definition captures a physical property.  We will see that the inner product space part of the Hilbert space definition is physical as it is required to to compute entropy and probabilities. We have also seen that to argue for the unphysicality of a mathematical definition we need to show that the definition implies objects that are physically unrealizable. We will see that the completeness part of the Hilbert space definition is unphysical as it forces us to include states with infinite (or undefined) expectation values.

\subsection{Inner product space}

Recall that a Hilbert space is a complete inner product space. That is, it is a vector space with an inner product, and its Cauchy sequences converge under the norm defined by the inner product. We will look at the first two structures only briefly, leaving all detail in the appendix, as their physical significance is unsurprising.

Roughly speaking, a vector space is a set whose elements can be combined linearly: they can be multiplied by constants and summed. If one regards superpositions as physical requirements of quantum theory, one may think this structure is physical. However, the ability to interpret linear combinations as superpositions requires the ability to keep track of orthonormal vectors. If we can only know whether two vectors are in the same ray, we cannot recover the idea of probability amplitudes, as those are the coefficients of a linear combination between normalized and orthogonal states that return another normalized state. Therefore, the vector space structure, by itself, is not physical.

If we add the inner product, we can define norm and orthogonality, and therefore we can indeed express superpositions. Additionally, the inner product structure defines the Born rule, the transition probability during measurement. While some may consider these physical requirements of quantum mechanics, their status is subject to interpretation~\cite{albert_quantum_1994, wallace_everett_2013, howard_complementarity_2021, ghirardi_unified_1986}. Some consider superpositions as mathematical constructs; the Born rule brings in the notion of measurements. We therefore reframe those requirements in terms of properties of ensembles. Here we provide a summary, and leave the details to the appendix.

Since we can physically prepare statistical ensembles, a physical theory must be able to characterize them. Mathematically, these form a convex set, where the convex combinations are the statistical mixtures.\footnote{There is a rich literature on generalized probabilistic theories (GPTs) in which people reconstruct quantum theory by imposing requirements on convex spaces~\cite{gpt_overview_2021}}. In quantum mechanics, the geometry of the space of statistical ensembles is exactly defined by the complex inner product and vector space structures, which means properties of the inner product space can be reexpressed as properties of the space of mixtures. For example, we can characterize a superposition in the following way: a pure state $\psi$ is expressible as a superposition of other pure states $\{\phi_i\}$ if and only if there is a mixed state $\rho$ that can be equivalently expressed as a mixture of $\{\phi_i\}$ or as a mixture of $\psi$ and other pure states. That is, the ability to create a superposition is conceptually equivalent to the ability to prepare the same mixture in different ways, another feature particular to quantum mechanics. Additionally, the norm of the inner product (i.e. the probabilities from the Born rule), together with relative phases, can be shown to be equivalent to determining the entropy of all equal mixtures of pairs of states. These requirements are necessary in the sense that both classical and quantum mechanics need to keep track of how ensembles combine under mixing and what entropy each ensemble has. As such, this approach avoids the issues of interpretation.

The argument can be summed up as follows. On physical grounds, for any physical theory, we must be able to prepare statistical mixtures and characterize their entropy. The complex inner product space structure is exactly required to model the correct relationships of quantum mixtures, therefore the mathematical definition is necessitated by physical requirements. \textbf{The inner product space structure is physical}.

\subsection{Completeness}

We now turn to the last property of Hilbert spaces: completeness. Mathematically, this means that the space includes the limit of every Cauchy sequence; physically, this property is more difficult to characterize. We can look at an equivalent characterization~\cite[Theorem 13.8]{roman_2008}. An inner product space is complete if, given a series of vectors
$$ \sum _{k=0}^{\infty }u_{k}$$
such that
$$ \sum _{k=0}^{\infty }\|u_{k}\|<\infty,$$
the series converges in the space. That is, if the series of the norm converges, the series converges. This characterization of completeness is in terms of linear combinations. It implies that we can prepare states that are superpositions of infinitely many elements or, equivalently, as we have seen, mixed states comprised of infinitely many pure states. Is this a physical requirement or just mathematical convenience?

Experimentally, we can create a superposition by splitting a beam along multiple paths, which is typical in quantum optics. The above mathematical expression means splitting the original state into infinitely many progressively smaller components, and being able to place them where we please. However, it also assumes that we can place them over an infinite spatial range: we are literally saying that we can prepare states that are spread over infinite regions. This does not sound physical.

Again, given the difficulties of understanding superposition, let us reframe the result in terms of statistical quantities. Suppose we can prepare pure states with arbitrary finite expectation values for position. Then, completeness means we can also prepare pure states that fall off at infinity as $\frac{1}{x^2}$. The expectation value of $x$ would be proportional to the integral of a function that falls off as $\frac{1}{x}$, which diverges. Clearly, this does not make physical sense. When we posit that a quantity, like position, has an infinite range of possible values, we simply mean that the observed value can be arbitrarily large, not that we can literally observe an infinite value.\footnote{This is the difference between potential infinity and actual infinity.}

The argument can be summed up as follows. On physical grounds, we assume the existence of quantities, like position, that can take arbitrarily large values. This requires that, in principle, for each possible value, there must exist some state whose expectation matches that value. Completeness forces us to include states with an infinite expectation value, which cannot be implemented physically. Thus the mathematical definition implicitly assumes objects that cannot be physically realized, so \textbf{the use of completeness for the space of quantum systems is unphysical.}\footnote{As rigged Hilbert spaces are Hilbert spaces with additional structure, they cannot solve this problem. For further discussion, see the appendix.}

\subsection{Impact of unphysicality}

It should not be surprising that closure over infinite sequences is problematic, as properly handling infinity is one of the key problems in mathematics. We should understand how completeness tries to address that problem and its impact on the physics.

If an inner product space is finite-dimensional, it is automatically complete.\footnote{This is exactly why completeness is mathematically desirable: it makes infinite-dimensional spaces behave like finite-dimensional ones.} If the space is not complete, we can always complete it by enlarging it. The prevalent attitude among physicists is that this does not pose a problem: we have a bigger set of objects than we strictly need, and we simply discard what we do not need. Unfortunately, having a bigger space can introduce new problems.

For example, suppose we want to define volumes in three-dimensional Euclidean space $\mathbb{R}^3$. Formally, we want to find a measure $\mu(U)$ that returns the volume of a region $U$. Naively, we could take a region to be any subset of $\mathbb{R}^3$. However, this leads to the Banach–Tarski paradox~\cite{tao2021introduction}, which states that we can take a unit ball, cut it into five pieces, and reassemble those pieces into two balls each of the same volume as the original. Clearly, this does not make physical sense. However, we get a consistent definition if we restrict regions to be Borel sets, instead of arbitrary subsets. Therefore, bigger is not always better in mathematics:\footnote{In another work~\cite{aop-book} some of us show that the Borel sets are exactly those sets that can be associated with experimental procedures. We have often found that physical requirements naturally map to well-behaved spaces.} we need to look more closely at what the math dragged in.

Unitary transformations play an important role in the theory of Hilbert spaces for at least two reasons. The first is that two Hilbert spaces that differ by a unitary transformation are mathematically equivalent. For example, changes of basis and changes of variables are unitary transformations. This is already a problem for physics: the space of wave functions for $n$ particles is the Hilbert space $L^2(\mathbb{R}^{3n})$. For any value of $n$, these infinite-dimensional spaces are separable, meaning that they all have a countable basis, and so are unitarily equivalent. That is, the state space for one particle is the same as the state space for, say, a billion particles. We can therefore theoretically construct a quantum channel that encodes the information of a billion particles into a single one. Note that this is not possible in classical mechanics: the state space for $n$ classical particles is $\mathbb{R}^{6n}$. For different $n$, these spaces are not even topologically equivalent.

The second reason for the importance of unitary transformations is that they represent possible time evolution for an isolated system. The issue is that unitary transformations in a Hilbert space can map states with finite expectation values to those with infinite ones, leading to obvious problems.

As a concrete example, consider the following two wave functions
\begin{align}
\psi(x) &= \sqrt{\frac{e^{-x^2}}{\sqrt{\pi}}} \\
\phi(x) &= \sqrt{\frac{1}{\pi(x^2 + 1)}}.
\end{align}
Both states are symmetric and, therefore, will have zero expectation value for $x$. The probability distribution for the first is a Gaussian; therefore, all moments of the distribution, the expectation for $x^n$, are finite. The one for the second goes to infinity like $\frac{1}{x^2}$, so the expectation will diverge for all moments above the first. We can now find a change of variable $y=y(x)$ that transforms one distribution into the other. We need to set
\begin{equation}
\begin{aligned}
\int_{0}^{y(x)} \phi^\dagger(\hat{y}) \phi(\hat{y}) d\hat{y} &= \int_{0}^{x} \psi^\dagger(\hat{x}) \psi(\hat{x}) d\hat{x} \\
\int_{0}^{y(x)} \frac{1}{\pi(\hat{y}^2 + 1)} d\hat{y} &= \int_{0}^{x} \frac{e^{-\hat{x}^2}}{\sqrt{\pi}} d\hat{x} \\
\frac{\tan^{-1}(y(x))}{\pi} &= \frac{\erf(x)}{2} \\
y(x) &= \tan \left(\frac{\pi}{2}\erf(x)\right). \\
\end{aligned}
\end{equation}
Therefore, through a variable change, we can change a state with a finite expectation value for all moments greater than the first into a state with an infinite expectation value for the same moments. Variable changes are unitary transformations on the Hilbert space. Therefore, the theory of Hilbert spaces, as applied to quantum mechanics, would tell us that the two representations are equivalent.

On physical grounds, these are clearly not equivalent. Reconstructing a state from its expectation values is a common technique, quantum tomography~\cite{banaszek2013focus}. If a unitary transformation can change finite expectations to infinite ones, we effectively change what can be distinguished experimentally. Infinite uncertainty in one frame would map to finite uncertainty in another. Distributions that are distinguishable in one frame would not be distinguishable in another: in the example above, if we increased the variance, one observer would see all the even moments change, but the other would see them all equally infinite. The math, as it stands, takes physically inequivalent states as equivalent, making it impossible later to separate physical questions from unphysical ones.

Moreover, we can create the following family of transformations
\begin{equation}
z(x, t) = \cos(\omega t) x + \sin(\omega t) \tan \left(\frac{\pi}{2}\erf(x)\right).
\end{equation}
This is clearly continuous in $t$ and, for $t=0$, we have the identity transformation, and therefore, mathematically, there is no reason why this would not be a valid time evolution. What this does is stretch back and forth the wave function, such that the state keeps oscillating between the two functions above, making the expectation value oscillate from finite to infinite in finite time. This, again, is physically untenable.

Again, these issues often do not arise when performing actual calculations because physics makes us work in the corner of the structure where things are reasonable. The issue is that when one tries to prove theorems or general results on quantum theory, one is currently forced to consider the entire Hilbert space. It is unreasonable to expect all theorems valid in all physical cases to have valid generalizations for unphysical ones. We are likely missing out on useful results by not considering what is provable within the subset of physical cases. We, therefore, wonder whether many of the technical issues we have in quantum theory (e.g.~the non-existence of a valid measure for path integrals~\cite{glimm_quantum_1987}, inequivalent representations for interacting theories~\cite{earman_haags_2006}, the non-existence of the categorical tensor product for Hilbert spaces~\cite{garrett_tensor_2010, sorkin_inside_2022}, etc.)  are simply a consequence of using unphysical objects at a fundamental level.

\section{On the possible physicality of Schwartz spaces}

While we do need infinite-dimensional spaces to handle unbound quantities, such as the number of particles, or continuous quantities, such as position, it seems that completeness is not the proper characterization. While finding the proper one goes beyond the scope of this work, we want to show that fruitful lines of inquiry do exist.

A reasonable alternative would be to require the expectation of all polynomials of position and momentum to be finite. In quantum mechanics, this means requiring all expectations of $\frac{1}{2}(X^nP^m + P^mX^n)$ to be finite. This restricts our space to those functions that are infinitely smooth (i.e.~all derivatives exist) and decrease very rapidly as $x$ goes to infinity: these are the Schwartz functions.

The Schwartz space has several other interesting properties~\cite{moretti_spectral_2017, reed_methods_1980, hall_quantum_2013, rauch_solution_1991}, explored in more detail in the appendix. For example:

\begin{itemize}
\item its elements are fully identified by the expectations - two Schwartz functions are the same if and only if all expectations are the same; quantum tomography on all polynomials fully reconstructs the state
\item it is an inner product space with the same norm as $L^2$
\item it is a dense subset of $L^2$ - any element of the Hilbert space can be approximated by a Schwartz function with an arbitrary level of precision
\item it is complete with respect to the expectation values - a sequence of functions for which all expectation values of all polynomials of position and momentum converge will converge in the Schwartz space
\item the Schwartz spaces over $\mathbb{R}^{n}$ are topologically different if $n$ is different
\item its closure over all Cauchy sequences recovers the Hilbert space - the only thing missing are the Cauchy sequences for which the limit of at least one expectation is not well defined
\item it is closed under Fourier transform - the Fourier transform of a Schwartz function is a Schwartz function
\item it plays a fundamental role in the theory of distributions - objects like the $\delta$-function are mathematically defined on top of the Schwartz space
\item the Hermite polynomials, the solutions of the harmonic oscillator, are Schwartz functions
\end{itemize}
These properties make the Schwartz space a much more reasonable candidate to capture the physics~\cite{albert2022bosonic}. The convergence of expectation values of actual observables replaces convergence over vectors in an abstract space.\footnote{Compare this statement with the ``weak convergence'' of von Neumann algebras.} However, it does not seem to us that this is a final and general answer. It is not clear to us how Schwartz spaces are generalized in the case where the topology of the underlying space is not $\mathbb{R}^n$. Moreover, something more general will be needed in the case of indistinguishable particles of variable number, which is handled by the Fock space in the standard treatment. However, we do think that the properties of Schwartz spaces may provide guidance for a possible comprehensive solution.\footnote{Note that $C^*$-algebras model observables and therefore do not solve the problem at hand of modeling states.}

\section{Conclusion}

We have made the case that Hilbert spaces in quantum mechanics are \textbf{unphysical} as they necessarily bring in states with infinite expectation values, variable changes that turn finite expectations into infinite ones, and time evolution operators that do the same in finite time. This flawed characterization of infinity is the likely cause of outstanding technical problems in the foundations of quantum mechanics and quantum field theory, and makes it impossible to have a fully meaningful understanding of the theory.

The more general problem is that the physics community has become complacent in simply accepting mathematical definitions without properly understanding their limit of applicability to physical theories. We believe that meaningful progress on the foundations of physics cannot happen without understanding the implicit assumptions embedded in the most basic mathematical tools.

\section*{Acknowledgments}
The authors wish to thank David Carf\`{i} and Noel Swanson for helpful discussions.  This paper is part of the ongoing Assumptions of Physics project~\cite{aop-book}, which aims to identify a handful of physical principles from which the basic laws can be rigorously derived.

\section*{Data Availability Statement}
Not applicable.

\section*{Conflict of interest}
On behalf of all authors, the corresponding author states that there is no conflict of interest. 

\bibliography{bibliography}

\newcommand{\pj}[1] {\underbar{$#1$}}

\section*{Appendix}

This appendix presents the same arguments in the article's main body, but expanded with full mathematical detail. The figure and the derivations in the Appendix are courtesy of the Assumptions of Physics project~\cite{aop-book}.

\subsection{Physicality of inner product space structure}

The first task is to show that the vector space structure is not, by itself, physical. Since quantum states are represented by rays of a Hilbert space, and rays can be defined simply on top of the vector space structure alone, one may think that the vector space structure, by itself, is physical. The issue is that, without an inner product, one cannot make physical sense of the coefficients one uses for linear combinations, and, therefore, superpositions become ill-defined.

To show the problem, we prove the following proposition that tells us that there are infinitely many ways to decompose a vector representing a state in terms of vectors representing two other states, meaning that the coefficients of the linear combination are arbitrary.
\begin{prop}[The vector space structure is not enough to uniquely identify superpositions]\label{vector_insufficient} Let $\mathcal{H}$ be a complex vector space and let $\textbf{P}(\mathcal{H})$ be its projective space. Let $\psi, \phi_1, \phi_2 \in \mathcal{H}$ such that $\psi = c_1 \phi_1 + c_2 \phi_2$ for some $c_1, c_2 \in \mathbb{C}$. Let $\pj{\psi}, \pj{\phi_1}, \pj{\phi_2} \in \textbf{P}(\mathcal{H})$ be the corresponding rays in the projective space. Then for any $\hat{c}_1, \hat{c}_2 \in \mathbb{C}$ we can find  $\hat{\phi}_1, \hat{\phi}_2 \in \mathcal{H}$ such that $\pj{\hat{\phi}}_1 = \pj{\phi_1}$, $\pj{\hat{\phi}}_2 = \pj{\phi_2}$ and
$$\psi = \hat{c}_1 \hat{\phi}_1 + \hat{c}_2 \hat{\phi}_2.$$
\end{prop}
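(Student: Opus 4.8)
The plan is to exploit the defining feature of the projective space: a ray $\pj{\phi}$ is precisely the equivalence class of nonzero scalar multiples of $\phi$, so two vectors determine the same ray exactly when they differ by a nonzero complex factor. Thus the condition $\pj{\hat{\phi}}_1 = \pj{\phi_1}$ means $\hat{\phi}_1 = \lambda_1 \phi_1$ for some $\lambda_1 \in \mathbb{C} \setminus \{0\}$, and likewise $\hat{\phi}_2 = \lambda_2 \phi_2$. The entire problem therefore collapses to choosing the two nonzero rescalings $\lambda_1, \lambda_2$ so that the prescribed coefficients $\hat{c}_1, \hat{c}_2$ still reproduce $\psi$.

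First I would substitute the ansatz $\hat{\phi}_i = \lambda_i \phi_i$ into the target identity, turning $\psi = \hat{c}_1 \hat{\phi}_1 + \hat{c}_2 \hat{\phi}_2$ into $\psi = (\hat{c}_1 \lambda_1) \phi_1 + (\hat{c}_2 \lambda_2) \phi_2$. Comparing with the given decomposition $\psi = c_1 \phi_1 + c_2 \phi_2$, it suffices to match coefficients, i.e.\ to solve $\hat{c}_i \lambda_i = c_i$. Taking $\lambda_i = c_i / \hat{c}_i$ does exactly this. I would then verify the two requirements explicitly: the resulting $\hat{\phi}_i = (c_i / \hat{c}_i) \phi_i$ lie in the correct rays because the factor $c_i / \hat{c}_i$ is nonzero, and back-substitution gives $\hat{c}_1 \hat{\phi}_1 + \hat{c}_2 \hat{\phi}_2 = c_1 \phi_1 + c_2 \phi_2 = \psi$, as desired. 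Since $\hat{c}_1, \hat{c}_2$ were arbitrary, this already exhibits the claimed infinitude of decompositions.

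The only real care needed is over degenerate coefficient values, so I would state at the outset that $\hat{c}_1, \hat{c}_2$ are nonzero (otherwise the prescribed combination collapses to a single ray and cannot represent a generic $\psi$) and that $c_1, c_2$ are nonzero (so that $\psi$ genuinely involves both $\phi_1$ and $\phi_2$); under these hypotheses the ratios $c_i / \hat{c}_i$ are well defined and nonzero, which is all the construction uses. I expect this to be the main obstacle, in the sense that it is the only place where the argument could break down, and it is a matter of pinning down the intended case of a genuine two-component superposition rather than a substantive mathematical difficulty. Notably, the argument never invokes linear independence of $\phi_1$ and $\phi_2$: if they happen to be proportional the same rescaling still works and the freedom in the decomposition is only larger, so no separate case analysis is required.
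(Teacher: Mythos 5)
Your proposal is correct and follows essentially the same route as the paper's proof: rescale each $\phi_i$ by $c_i/\hat{c}_i$ so that it stays in the same ray while absorbing the change of coefficient. Your explicit attention to the nonvanishing of $c_i$ and $\hat{c}_i$ is a point the paper's own proof leaves implicit, but it does not change the substance of the argument.
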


\begin{proof}
Let $\hat{\phi}_1 = \frac{c_1}{\hat{c}_1} \phi_1$. Since they differ only by a constant, they are elements of the same ray. That is, $\pj{\hat{\phi}}_1 = \pj{\phi_1}$. Similarly, we let $\hat{\phi}_2 = \frac{c_2}{\hat{c}_2} \phi_2$ and we have $\pj{\hat{\phi}}_2 = \pj{\phi_2}$. We have
\begin{equation}
\begin{aligned}
\hat{c}_1 \hat{\phi}_1 + \hat{c}_2 \hat{\phi}_2 &= \hat{c}_1 \frac{c_1}{\hat{c}_1} \phi_1 + \hat{c}_2 \frac{c_2}{\hat{c}_2} \phi_2 \\
&= c_1 \phi_1 + c_2 \phi_2 \\
&= \psi.
\end{aligned}
\end{equation}
\end{proof}

The intuition that a state has a single linear decomposition in terms of other vectors requires, first of all, that the vectors are normalized. To identify the coefficients as probability amplitudes, that is that $\sum_i |c_i|^2 = 1$, further requires decomposition on orthogonal vectors.\footnote{Mathematically, this means turning the vectors $\{\phi_i\}$ into an orthonormal set using Gram-Schmidt orthonormalization, which requires the inner product.} That is, the inner product is required for linear combinations to be understood physically as superpositions. Therefore, the vector space structure by itself is insufficient to provide a clear connection to physics.

We could now use the above reasoning to argue that the inner product space structure, as a whole, is physical because it indeed characterizes both superpositions and the Born rule. These are features required by quantum mechanics and, therefore, representationally necessary and physically motivated. Many will find this argument convincing enough, but there are several problems.

First of all, it is not clear whether superpositions are actual physical entities or whether they are just mathematical constructs. This is an area where different interpretations may differ~\cite{albert_quantum_1994, wallace_everett_2013, howard_complementarity_2021}, and the physicality of a mathematical representation should not be interpretation-dependent. Second, in its standard characterization, the Born rule brings in one of the most controversial aspects of the theory, measurements, and thus similarly suffers from an interpretation problem. Lastly, the inner product is critical for both the Born rule, which is a transition probability between states before and after a measurement, and linear decomposition into superposition, which is done at equal time. Given that we need to show that the mathematical definition captures and only captures a particular aspect of a physical system, we need to formulate a physical requirement that is interpretation-independent and grounded in experimental practice.

We will show that the mathematical tools required to characterize statistical mixtures can be reformulated in terms of superpositions and the Born rule. Mathematically, these will be in terms of the space of density operators instead of the vector space itself. Since preparation of statistical mixtures is well grounded in experimental practice, their characterization provides a well posed physical requirement.

The idea is to link superposition to another feature of quantum mechanics: the non-unique decomposition of mixed states into pure states. For example, if we take the maximally mixed state for a spin 1/2 system, this can be implemented with an equal mixture of spin up and spin down or with an equal mixture of spin left and spin right. The idea is that a pure state $\psi$ is expressible as a superposition of other pure states $\{\phi_i\}$ exactly because there is a mixed state $\rho$ that can be equivalently expressed as a mixture of $\{\phi_i\}$ or as a mixture of $\psi$ and other pure states. Subspaces of a Hilbert space group together pure states that can provide equivalent statistical mixtures. To keep the reasoning as general as possible, we will not require $\{\phi_i\}$ to be orthogonal to each other.

\begin{prop}\label{prop_densitySpan}
Let $\mathcal{H}$ be a complex inner product space. Let $\{\psi_i\}, \{\phi_j\} \in \mathcal{H}$ be two finite sets of normalized vectors. Suppose there exists a density operator that can be expressed as a strictly convex combination of either set. That is, we can find two sequences $\{a_i\}, \{b_j\} \in \mathbb{R}_{>0}$ such that
\begin{enumerate}
\item $\sum_i a_i = 1$
\item $\sum_j b_j = 1$
\item $\sum_i a_i |\psi_i\>\<\psi_i| = \sum_j b_j |\phi_j\>\<\phi_j|$,
\end{enumerate}
then the span of the two sets is the same.
\end{prop}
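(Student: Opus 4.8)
The plan is to identify the two (allegedly equal) spans with a single, decomposition-independent object: the range of the common density operator
$$\rho := \sum_i a_i |\psi_i\>\<\psi_i| = \sum_j b_j |\phi_j\>\<\phi_j|.$$
If I can show that $\operatorname{range}(\rho) = \operatorname{span}\{\psi_i\}$ using only the first decomposition, and symmetrically $\operatorname{range}(\rho) = \operatorname{span}\{\phi_j\}$ using the second, then the two spans must coincide because they are both equal to the range of the same operator. All subspaces in play are finite-dimensional (there are finitely many vectors), hence closed, so I may freely use orthogonal projections and the identity $(S^\perp)^\perp = S$ even though $\mathcal{H}$ is only assumed to be an inner product space, not complete.

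First I would record that $\rho$ is self-adjoint and positive semidefinite, being a nonnegative combination of the rank-one projectors $|\psi_i\>\<\psi_i|$. The inclusion $\operatorname{range}(\rho) \subseteq \operatorname{span}\{\psi_i\}$ is then immediate from the explicit action $\rho v = \sum_i a_i \<\psi_i | v\>\, \psi_i$, which is manifestly a linear combination of the $\psi_i$ for every $v$.

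The key step---and the only place the strict positivity $a_i > 0$ enters---is the reverse inclusion, which I would extract from the kernel of $\rho$. For any $v$, positivity gives $\<v | \rho v\> = \sum_i a_i |\<\psi_i | v\>|^2$, and since $\rho \geq 0$ we have $v \in \ker(\rho)$ iff $\<v | \rho v\> = 0$. Because every $a_i$ is strictly positive, this sum vanishes exactly when $\<\psi_i | v\> = 0$ for all $i$, i.e.\ when $v \perp \operatorname{span}\{\psi_i\}$; hence $\ker(\rho) = \operatorname{span}\{\psi_i\}^\perp$. Self-adjointness also supplies the always-valid identity $\ker(\rho) = \operatorname{range}(\rho)^\perp$. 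Equating the two expressions for the kernel and taking orthogonal complements---legitimate because $\operatorname{range}(\rho)$ and $\operatorname{span}\{\psi_i\}$ are finite-dimensional and therefore closed---yields
$$\operatorname{range}(\rho) = \operatorname{span}\{\psi_i\}.$$
Running the identical argument on the second decomposition gives $\operatorname{range}(\rho) = \operatorname{span}\{\phi_j\}$, and comparing the two completes the proof.

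The main obstacle I anticipate is conceptual rather than computational: one must resist trying to match individual $\psi_i$ against individual $\phi_j$ (which is hopeless, since the two decompositions need not share a single vector) and instead pass to the basis-independent invariant, the range of $\rho$. The only technical care required is to justify $(S^\perp)^\perp = S$ in the absence of completeness, which is handled entirely by the finiteness of the two vector sets; and to note that strict positivity of the coefficients is \emph{essential}---were some $a_i$ allowed to vanish, the corresponding $\psi_i$ could fall outside $\operatorname{range}(\rho)$ and the stated conclusion would fail.
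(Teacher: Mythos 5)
Your proof is correct and follows essentially the same route as the paper's: both hinge on the observation that $\langle\varphi|\rho|\varphi\rangle = \sum_i a_i|\langle\psi_i|\varphi\rangle|^2$ vanishes exactly when $\varphi$ is orthogonal to every $\psi_i$ (the only place strict positivity of the coefficients is used), and both then recover equality of the spans by passing to orthogonal complements. Your repackaging through $\operatorname{range}(\rho)$ is a cosmetic variation, though your explicit justification of $(S^\perp)^\perp = S$ for finite-dimensional subspaces of a possibly incomplete inner product space makes precise a step the paper's proof uses tacitly.
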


\begin{proof}
Consider $\rho = \sum_i a_i |\psi_i\>\<\psi_i|$. For a normalized vector $\varphi \in \mathcal{H}$ we have
$$p(\varphi|\rho)=tr(\rho |\varphi\>\<\varphi|) = \sum_i a_i |\<\psi_i|\varphi\>|^2$$
Since all $a_i$ are positive, $p(\varphi|\rho) = 0$ if and only if $\varphi$ is orthogonal to all $\psi_i$.

Now suppose $\rho = \sum_i a_i |\psi_i\>\<\psi_i| = \sum_j b_j |\phi_j\>\<\phi_j|$. Then a vector $\varphi$ is orthogonal to all $\psi_i$ if and only if it is also orthogonal to all $\phi_i$, which means $span(\{\psi_i\}) = span(\{\phi_j\})$) as they have the same orthogonal complement.
\end{proof}

\begin{prop}\label{prop_decomposition}
Let $\mathcal{H}$ be a complex inner product space. Let $\psi \in \mathcal{H}$ be a normalized vector and $\rho : \mathcal{H} \to \mathcal{H}$ a density operator of finite rank such that $\<\psi|\rho|\psi\>\neq 0$. Then we can express $\rho$ as a mixture of pure states that includes $|\psi\>\<\psi|$. That is, there exists a finite set of normalized vectors $\{\psi_i\} \in \mathcal{H}$ such that $\psi_1 = \psi$ and $\rho = \sum_i p_i |\psi_i\>\<\psi_i|$, where $\{p_i\} \in \mathbb{R}_{>0}$ and $\sum_i p_i = 1$.
\end{prop}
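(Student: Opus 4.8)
The plan is to construct the decomposition by \emph{peeling off} a small positive multiple of $|\psi\rangle\langle\psi|$ from $\rho$ and then spectrally decomposing what remains. First I would write $\rho$ in spectral form $\rho=\sum_{k=1}^{n}\lambda_k\,|e_k\rangle\langle e_k|$ with $\{e_k\}$ orthonormal, $\lambda_k>0$, and $\sum_k\lambda_k=1$, so that the support $V=\operatorname{span}\{e_1,\dots,e_n\}$ is finite-dimensional. The guiding idea is that $|\psi\rangle\langle\psi|$ can occur with positive weight in a decomposition of $\rho$ exactly when $\rho$ still dominates a multiple of it, i.e. when $\rho-p\,|\psi\rangle\langle\psi|$ stays positive semidefinite for some $p>0$.

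Concretely, I would consider the candidate remainder
\[
\sigma_p \;=\; \frac{1}{1-p}\bigl(\rho - p\,|\psi\rangle\langle\psi|\bigr),
\]
and determine the $p\in(0,1)$ for which $\sigma_p$ is a density operator. The trace condition is automatic, $\operatorname{tr}\sigma_p=\tfrac{1}{1-p}(1-p)=1$, so the only real requirement is positivity, $\rho-p\,|\psi\rangle\langle\psi|\ge 0$. Working in the basis $\{e_k\}$ with $\psi=\sum_k c_k e_k$, a Schur-complement (equivalently, direct eigenvalue) computation reduces this to $p\,\langle\psi|\rho^{-1}|\psi\rangle\le 1$, where $\rho^{-1}$ is the inverse of $\rho$ on $V$. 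Thus any $p$ with $0<p\le 1/\langle\psi|\rho^{-1}|\psi\rangle$ works, and Cauchy--Schwarz in the form $\langle\psi|\rho^{-1}|\psi\rangle\,\langle\psi|\rho|\psi\rangle\ge 1$ shows this bound lies in $(0,1]$, so such a $p$ exists. Fixing one, I would spectrally decompose the finite-rank $\sigma_p=\sum_j \mu_j |f_j\rangle\langle f_j|$ and assemble
\[
\rho \;=\; p\,|\psi\rangle\langle\psi| + (1-p)\sum_j \mu_j\,|f_j\rangle\langle f_j|,
\]
which is precisely the desired finite mixture with $\psi_1=\psi$, leading weight $p>0$, and remaining weights $(1-p)\mu_j>0$ summing to $1$.

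The step I expect to be the crux is securing the positivity of $\rho-p\,|\psi\rangle\langle\psi|$, and in particular the hidden requirement that $\psi$ lie in the support $V$ of $\rho$. The stated hypothesis $\langle\psi|\rho|\psi\rangle\neq 0$ only forbids $\psi$ from being \emph{orthogonal} to $V$; it does not prevent $\psi$ from having a component $w\in\ker\rho$. If it does, then $\langle w|(\rho-p\,|\psi\rangle\langle\psi|)|w\rangle=-p\,|\langle w|\psi\rangle|^2<0$ for every $p>0$, so no positive weight can be peeled off---indeed, for $\rho=|e_1\rangle\langle e_1|$ and $\psi=\tfrac{1}{\sqrt2}(e_1+e_2)$ the conclusion fails even though $\langle\psi|\rho|\psi\rangle=\tfrac12\neq0$, since an extreme point admits no nontrivial convex decomposition. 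I would therefore expect the argument to close only under the strengthened (and presumably intended) hypothesis $\psi\in\operatorname{range}(\rho)$, under which $\langle\psi|\rho^{-1}|\psi\rangle$ is finite and the peeling construction above goes through.
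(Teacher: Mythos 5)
Your peeling construction is correct and takes a genuinely different route from the paper's. The paper argues geometrically: it draws the line through $\rho$ and $|\psi\rangle\langle\psi|$ inside the convex set of density operators, takes the second boundary point $\rho_1$, claims $\rho$ is a \emph{strictly} convex combination of $|\psi\rangle\langle\psi|$ and $\rho_1$, and then decomposes $\rho_1$ into extreme points. Your version replaces that with the explicit criterion that $\rho-p\,|\psi\rangle\langle\psi|\ge 0$ exactly when $\psi\in\operatorname{range}(\rho)$ and $p\le 1/\langle\psi|\rho^{-1}|\psi\rangle$, which buys a concrete admissible weight and avoids the paper's rather loose handling of the boundary of the set of density operators (``must intersect the boundary in two places,'' ``that region will be a convex set constrained on a subspace $V_1$''). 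One small tidy-up: if the maximal admissible $p$ equals $1$, your normalization $1/(1-p)$ degenerates, but then $\rho-|\psi\rangle\langle\psi|$ is positive semidefinite with zero trace, hence zero, so $\rho=|\psi\rangle\langle\psi|$ and the claim is trivial.

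More importantly, the obstruction you flag is real and is not an artifact of your method. Your counterexample $\rho=|e_1\rangle\langle e_1|$, $\psi=\tfrac{1}{\sqrt2}(e_1+e_2)$ satisfies the stated hypothesis $\langle\psi|\rho|\psi\rangle=\tfrac12\neq 0$ yet admits no decomposition with positive weight on $|\psi\rangle\langle\psi|$, since a pure state decomposes only trivially. The paper's own proof breaks on this example at precisely the step you anticipate: there the second boundary point is $\rho_1=\rho$ itself, the assertion that ``we can't have $\rho_1=\rho$ as $\langle\psi|\rho|\psi\rangle\neq 0$'' is false, and the extreme-point branch silently yields the combination $0\cdot|\psi\rangle\langle\psi|+1\cdot\rho_1$, violating the requirement $p_i>0$. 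The proposition does need the strengthened hypothesis $\psi\in\operatorname{supp}(\rho)$, exactly as you say; fortunately that condition holds where the result is actually invoked in Proposition~\ref{prop_superpositionIsDecomposition}, since there $\rho=\sum_i p_i|\phi_i\rangle\langle\phi_i|$ with all $p_i>0$ and $\psi\in\operatorname{span}\{\phi_i\}=\operatorname{supp}(\rho)$.
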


\begin{proof}
The set of all possible density operators is a convex set, which can be characterized as a subset of a vector space $V$. Let $\rho$ be a density operator such that $\<\psi|\rho|\psi\>\neq 0$. This will be a point in the convex set. If $\rho = |\psi\>\<\psi|$, we are done. If not, the two elements $\rho$ and $|\psi\>\<\psi|$ will identify a line in $V$, which must intersect the boundary of the convex set in two places. One must be $\psi$ since $\psi$ is a pure state and, therefore, an extreme point. Let $\rho_1$ be the other point. If it is an extreme point, we are done as $\rho$ is a convex combination of $\psi$ and $\rho_1$. If not, $\rho_1$ is on a part of the boundary of the convex set that is not strictly convex. That region will be a convex set constrained on a subspace $V_1$ of $V$. Note that $V_1$ does not include $\psi$, and it does not span the direction identified by $\rho_1$ and $\psi$. This means that we can't have $\rho_1 = \rho$ as $\<\psi|\rho|\psi\>\neq 0$. Therefore we have that $\rho$ is a strictly convex combination of $\psi$ and $\rho_1$, which must be a density operator of finite rank lower than $\rho$. We can express $\rho_1$ as a convex combination of finitely many extreme points $\psi_2, \psi_3, ..., \psi_n$, each of them corresponding to a pure state. We thus have that $\rho$ is a strictly convex combination of pure states that includes $\psi$.
\end{proof}

\begin{prop}[Superposition is non-unique decomposition of mixed states.]\label{prop_superpositionIsDecomposition}
Let $\mathcal{H}$ be a complex vector space. Let $\psi \in \mathcal{H}$ be a normalized vector and $\{\phi_j\} \in \mathcal{H}$ a finite set of normalized vectors that does not contain $\psi$. Then $\psi$ is a linear combination of $\{\phi_j\}$ if and only if there exists a density operator $\rho$ that can be expressed both as a strictly convex combination of $\{\phi_j\}$, and of another set of normalized vectors that contain $\psi$. That is, we can find a finite set of normalized vectors $\{\psi_i\} \in \mathcal{H}$ and two sequences $\{a_i\}, \{b_j\} \in \mathbb{R}_{>0}$ such that
\begin{enumerate}
\item $\psi_1 = \psi$
\item $\sum_i a_i = 1$
\item $\sum_j b_j = 1$
\item $\sum_i a_i |\psi_i\>\<\psi_i| = \sum_j b_j |\phi_j\>\<\phi_j|$,
\end{enumerate}
\end{prop}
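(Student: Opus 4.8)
The plan is to prove the biconditional by treating its two directions separately, each as a short application of the two preceding propositions. The density-operator formulation already presupposes the inner product structure (inherited from Propositions~\ref{prop_densitySpan} and~\ref{prop_decomposition}), so I will work in that setting throughout. For the forward implication I would construct a witness $\rho$ explicitly and then re-decompose it through $\psi$ by Proposition~\ref{prop_decomposition}; for the reverse implication I would read the linear-combination claim directly off the equality of spans supplied by Proposition~\ref{prop_densitySpan}.

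For the forward direction, suppose $\psi = \sum_j c_j \phi_j$. I would fix any strictly positive weights $\{b_j\}$ with $\sum_j b_j = 1$ and set $\rho = \sum_j b_j |\phi_j\rangle\langle\phi_j|$, a finite-rank density operator that is by construction a strictly convex combination of $\{\phi_j\}$. To invoke Proposition~\ref{prop_decomposition} I must verify its hypothesis $\langle\psi|\rho|\psi\rangle \neq 0$. Here $\langle\psi|\rho|\psi\rangle = \sum_j b_j |\langle\phi_j|\psi\rangle|^2$, which vanishes only if $\psi$ is orthogonal to every $\phi_j$; but since $\psi$ lies in the span of $\{\phi_j\}$ and is normalized, hence nonzero, it cannot be orthogonal to its own span, so the sum is strictly positive. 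Proposition~\ref{prop_decomposition} then yields normalized vectors $\{\psi_i\}$ with $\psi_1 = \psi$ and strictly positive weights $\{a_i\}$ summing to one such that $\rho = \sum_i a_i |\psi_i\rangle\langle\psi_i|$, producing exactly the two decompositions required.

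For the reverse direction, suppose such a $\rho$ with both decompositions exists. Since $\rho$ is written as a strictly convex combination of $\{\psi_i\}$ and of $\{\phi_j\}$, Proposition~\ref{prop_densitySpan} gives $\mathrm{span}(\{\psi_i\}) = \mathrm{span}(\{\phi_j\})$. As $\psi = \psi_1$ belongs to $\{\psi_i\}$, it lies in this common span, and therefore $\psi$ is a linear combination of $\{\phi_j\}$, as claimed.

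The routine parts, namely choosing the weights and unwinding the definitions, are immediate, so the only real content is the positivity check $\langle\psi|\rho|\psi\rangle \neq 0$ in the forward direction, which is where I expect the main obstacle to lie. The argument hinges on the elementary fact that a nonzero vector in the span of a finite set cannot be orthogonal to every member of that set; once this is secured, both Proposition~\ref{prop_decomposition} and Proposition~\ref{prop_densitySpan} apply mechanically and the biconditional closes.
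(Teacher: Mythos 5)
Your proposal is correct and follows essentially the same route as the paper's proof: construct $\rho$ as a strictly convex combination of $\{\phi_j\}$, verify $\langle\psi|\rho|\psi\rangle\neq 0$, invoke Proposition~\ref{prop_decomposition} for the forward direction, and invoke Proposition~\ref{prop_densitySpan} for the converse. Your justification of the positivity check (a nonzero vector cannot be orthogonal to every member of a set whose span contains it) is in fact slightly cleaner than the paper's claim that at least two of the $\phi_j$ have nonzero overlap with $\psi$.
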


\begin{proof}
Suppose $\psi$ is a normalized vector that can be written as $\psi = \sum_{i} c_i \phi_i$ where $\{\phi_i\} \in \mathcal{H}$ is a finite set of normalized vectors that does not contain $\psi$ and $\{c_i\} \in \mathbb{C}$. Since $\{\phi_i\}$ are not necessarily orthogonal, it is not necessarily true that $|\<\psi|\phi_i\>| = |c_i|^2$ or that $|\<\psi|\phi_i\>| \neq 0$ for all $i$. However, since $\psi$ is normalized and $\psi \neq \phi_i$ for all $i$, there must be at least two vectors in $\{\phi_i\}$ such that $|\<\psi|\phi_i\>| \neq 0$. We can then find $\{p_i\} \in \mathbb{R}_{>0}$ for which $\sum_i p_i =1$, such that $\rho = \sum_i p_i |\phi_i\>\<\phi_i|$ and $\<\psi|\rho|\psi\> \neq 0$. By Proposition \ref{prop_decomposition}, $\rho$ can be expressed as a strictly convex combination of a set of normalized vectors that includes $\psi$.

Conversely, suppose we have a density operator $\rho$ that can be expressed as a strictly convex combination of $\{\phi_i\}$ and another set of normalized vectors containing $\psi$. By Proposition \ref{prop_densitySpan} the span of $\{\phi_i\}$ will include $\psi$ and therefore $\psi$ is a linear combination of $\{\phi_i\}$.
\end{proof}

As for the inner product, we show that the Born rule can be understood as characterizing the entropy of equal mixtures over state pairs. Note that the Born rule does not depend on the absolute phase, but it does depend on relative phases. If one expands a vector using superposition, in fact, the relative phase will introduce interference terms. Therefore recovering the Born rule means recovering the inner product up to arbitrary absolute phases, which do not contain any physical content.

Associating the Born rule with entropy gives us a geometric understanding of the inner product at equal time instead of as a probability of transition, which brings it more in line with the use during superposition. It should also be noted that the von Neumann entropy of a mixture can be understood as the minimum Shannon entropy over all possible measurement contexts~\cite{nielsen_chuang_2010}, which again shows the connection between the geometry of the convex space of statistical ensembles and the inner product.

\begin{prop}[Inner product is an entropic structure]\label{prop_innerProductIsEntropy}
Let $\mathcal{H}$ be a complex inner product space. Let $I : \mathcal{H} \times \mathcal{H} \to \mathbb{R}$ be the von Neumann entropy of the equal mixture of two pure states. That is
\begin{equation}
I(\psi, \phi) = S\left(\frac{1}{2}|\psi\>\<\psi| + \frac{1}{2}|\phi\>\<\phi|\right).
\end{equation}
Let $p : \mathcal{H} \times \mathcal{H} \to \mathbb{R}$ be the square of the inner product. That is
\begin{equation}
p(\psi, \phi) = |\<\psi| \phi\>|^2.
\end{equation}
Then $p$ and $I$ can be reconstructed from each other. That is, there exists an invertible function $f$ such that
\begin{equation}
I(\psi, \phi) = f(p(\psi, \phi)).
\end{equation}
\end{prop}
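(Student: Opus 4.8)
The plan is to exploit the fact that the mixture $\rho = \frac{1}{2}|\psi\>\<\psi| + \frac{1}{2}|\phi\>\<\phi|$ is supported entirely on the span of $\psi$ and $\phi$, which is at most two-dimensional. Its spectrum therefore consists of at most two nonzero eigenvalues, and since $0\log 0 = 0$, the von Neumann entropy depends only on those. Because the two eigenvalues sum to $\operatorname{tr}(\rho)=1$, a single further invariant is enough to pin them down, and the natural choice is the purity $\operatorname{tr}(\rho^2)$, which I expect to be a simple function of $p$.

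First I would compute $\operatorname{tr}(\rho^2)$ directly by expanding the square and using $\<\psi|\psi\>=\<\phi|\phi\>=1$ together with $\operatorname{tr}(|\psi\>\<\psi|\phi\>\<\phi|)=|\<\psi|\phi\>|^2$. The diagonal terms contribute $1$ each and the two cross terms contribute $p$ each, giving $\operatorname{tr}(\rho^2) = \tfrac{1}{2}(1 + |\<\psi|\phi\>|^2) = \tfrac{1+p}{2}$. Writing the two nonzero eigenvalues as $\lambda_\pm$, the relations $\lambda_+ + \lambda_- = 1$ and $\lambda_+^2 + \lambda_-^2 = \tfrac{1+p}{2}$ yield $\lambda_+\lambda_- = \tfrac{1-p}{4}$, and solving the resulting quadratic gives $\lambda_\pm = \tfrac{1}{2}\bigl(1 \pm \sqrt{p}\bigr)$.

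Next I would substitute into the entropy. Setting $f(p) = -\tfrac{1+\sqrt{p}}{2}\log\tfrac{1+\sqrt{p}}{2} - \tfrac{1-\sqrt{p}}{2}\log\tfrac{1-\sqrt{p}}{2}$, namely the binary entropy evaluated at $\tfrac{1+\sqrt{p}}{2}$, we have $I(\psi,\phi)=f(p(\psi,\phi))$ by construction, so only the invertibility of $f$ on the physical range $p\in[0,1]$ remains.

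The main obstacle is precisely this invertibility, which I would settle by proving strict monotonicity. As $p$ runs over $[0,1]$, the map $p\mapsto\sqrt{p}$ strictly increases over $[0,1]$ and the argument $\tfrac{1+\sqrt{p}}{2}$ strictly increases over $[\tfrac12,1]$; on this half-interval the binary entropy is strictly decreasing from its maximum $\log 2$ down to $0$, so $f$ is a strictly decreasing bijection of $[0,1]$ onto $[0,\log 2]$, and its inverse recovers $p$ from $I$. As a consistency check I would verify the endpoints: $p=1$ (states equal up to phase) gives a rank-one $\rho$ with $I=0$, while $p=0$ (orthogonal states) gives the maximally mixed two-level state with $I=\log 2$. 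The degenerate case where $\psi$ and $\phi$ are linearly dependent is subsumed by $p=1$ and needs no separate treatment.
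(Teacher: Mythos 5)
Your proof is correct, but it reaches the diagonal form of $\rho$ by a genuinely different route than the paper. The paper works geometrically: it restricts to the two-dimensional span of $\psi$ and $\phi$, draws the Bloch-sphere picture, identifies the equal mixture with the midpoint of the chord, and reads off the eigenbasis $\{+,-\}$ and eigenvalues $\tfrac{1}{2}\bigl(1\pm|\langle\psi|\phi\rangle|\bigr)$ from the angle $\theta_{\psi\phi}$ via explicit expansion of $|\psi\rangle\langle\psi|$ and $|\phi\rangle\langle\phi|$. You instead extract the same eigenvalues purely algebraically from the two trace invariants $\operatorname{tr}(\rho)=1$ and $\operatorname{tr}(\rho^2)=\tfrac{1+p}{2}$, solving the resulting quadratic. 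The algebraic route is shorter, avoids the qubit/Bloch-sphere identification entirely (and with it the tacit choice of a plane in which $\langle\psi|\phi\rangle$ is treated as real), and sidesteps a small inconsistency in the paper: the proposition defines $p=|\langle\psi|\phi\rangle|^2$, but the paper's proof silently switches to $p=|\langle\psi|\phi\rangle|$ in its final formula for $f$; your $f(p)$ written with $\sqrt{p}$ is the version that matches the proposition as stated, and of course the two differ only by composition with the invertible map $p\mapsto\sqrt{p}$ on $[0,1]$. What the paper's geometric argument buys in exchange is the explicit orthonormal eigenbasis and the visual link between mixing, diagonalization, and the Bloch-sphere angle, which the surrounding discussion leans on; your invariant-based argument buys brevity, dimension-independence of the presentation, and a cleaner monotonicity proof of invertibility (binary entropy strictly decreasing on $[\tfrac12,1]$), which the paper merely asserts.
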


\begin{proof}
Let $\psi, \phi \in \mathcal{H}$ be two normalized vectors. They will identify a two-dimensional subspace which can be thought, without loss of generality, as a qubit and therefore can be represented by a Bloch sphere. The below picture represents the intersection of the Bloch sphere with the plane identified by $\psi$ and $\phi$. Let $\rho = \frac{1}{2}|\psi\>\<\psi| + \frac{1}{2}|\phi\>\<\phi|$ be the density operator representing an equal mixture of the two states. It will be represented by the midpoint of the chord connecting $\psi$ and $\phi$. Now take the line that goes through $\rho$ and the center of the sphere: $\rho$ can also be expressed as the mixture of the states $+$ and $-$ which, since they represent equal and opposite directions, form a basis. To diagonalize $\rho$, then, means to express it in terms of $+$ and $-$. 

\begin{center}
	\begin{tikzpicture}[scale = 1]
		\draw (0,0) circle (2);
		\node at (-2.3,0) {$-$};
		\node at (2.3,0) {$+$};
		\node at (2,1.4) {$\psi$};
		\node at (2,-1.4) {$\phi$};
		\draw (-2,0) -- (2,0);
		\begin{scope}
			\clip(0,0) circle (2);
			\draw (0,0) -- (2,1.4);
			\draw (0,0) -- (2,-1.4);
			\draw (1.634,1.4) -- (1.634,-1.4);
		\end{scope}
		\fill (1.634,0) circle (0.05);
		\node at (1.8,.2) {$\rho$};
	\end{tikzpicture}
\end{center}

If $\theta_{\psi\phi}$ is the angle between $\psi$ and $\phi$ on the Bloch sphere, we have
\begin{equation}
	|\langle \psi | \phi \rangle |^2 = \cos^2 \frac{\theta_{\psi\phi}}{2}.
\end{equation}
The angle is divided by two because the angle on the Bloch sphere (i.e.~in physical space) is double the angle in the Hilbert space. For example, the angle between $+$ and $-$ on the Bloch sphere is $\pi$, but the inner product between them is zero (i.e.~opposite directions in physical space correspond to orthogonal states) and therefore the angle in the Hilbert space is $\pi/2$.

Now we express $\psi$ and $\phi$ in terms of $+$ and $-$, remembering that they form a basis. Given that $\rho$ is at the midpoint, the figure is vertically symmetric. The angle between $\psi$ and $+$, then, is half of $\theta_{\psi\phi}$. The inner product between $\psi$ and $+$ is
\begin{equation}
	\begin{aligned}
		|\langle \psi | + \rangle |^2 &= \cos^2 \frac{\theta_{\psi +}}{2} \\
		&= \cos^2 \frac{\theta_{\psi\phi}}{4}.
	\end{aligned}
\end{equation}
Keeping in mind that we are composing vectors in the Hilbert space (and not in the geometry of the physical space) we have
\begin{align*}
	\left|\psi\right>&=\cos\frac{\theta_{\psi\phi}}{4}\left|+\right>+\sin\frac{\theta_{\psi\phi}}{4}\left|-\right> \\
	\left|\phi\right>&=\cos\frac{\theta_{\psi\phi}}{4}\left|+\right>-\sin\frac{\theta_{\psi\phi}}{4}\left|-\right>.
\end{align*}

The density matrices corresponding to the pure states are
\begin{align*}
	\left|\psi\right>\left<\psi\right|&=\cos^2\frac{\theta_{\psi\phi}}{4}\left|+\right>\left<+\right|\\
	&+\cos\frac{\theta_{\psi\phi}}{4}\sin\frac{\theta_{\psi\phi}}{4}\left(\left|+\right>\left<-\right|+\left|-\right>\left<+\right|\right) \\
	&+\sin^2\frac{\theta_{\psi\phi}}{4}\left|-\right>\left<-\right| \\
	\left|\phi\right>\left<\phi\right|&=\cos^2\frac{\theta_{\psi\phi}}{4}\left|+\right>\left<+\right|\\
	&-\cos\frac{\theta_{\psi\phi}}{4}\sin\frac{\theta_{\psi\phi}}{4}\left(\left|+\right>\left<-\right|+\left|-\right>\left<+\right|\right) \\
	&+\sin^2\frac{\theta_{\psi\phi}}{4}\left|-\right>\left<-\right|.
\end{align*}

We can now calculate the mixture
\begin{align*}
	\frac{1}{2}(|\psi\rangle\langle\psi| &+ |\phi\rangle\langle\phi|) \\
	&=\cos^2\frac{\theta_{\psi\phi}}{4}\left|+\right>\left<+\right| +\sin^2\frac{\theta_{\psi\phi}}{4}\left|-\right>\left<-\right| \\
	&=\frac{1+\cos\frac{\theta_{\psi\phi}}{2}}{2}\left|+\right>\left<+\right| +\frac{1-\cos\frac{\theta_{\psi\phi}}{2}}{2}\left|-\right>\left<-\right| \\
	&=\frac{1+|\langle\psi|\phi\rangle|}{2}\left|+\right>\left<+\right| +\frac{1-|\langle\psi|\phi\rangle|}{2}\left|-\right>\left<-\right|. \\
\end{align*}

As $\rho$ is in a diagonal form, the entropy is given by
\begin{equation}\label{entropy}
	H(\rho) = H\left(\frac{1+|\langle\psi|\phi\rangle|}{2}, \frac{1-|\langle\psi|\phi\rangle|}{2}\right).
\end{equation}
If we define $p(\psi, \phi) = |\langle\psi|\phi\rangle|$ and $I(\psi, \phi) = H( \frac{1}{2}|\psi\>\<\psi| + \frac{1}{2}|\phi\>\<\phi|)$ we have
\begin{align}\label{entropy}
	I(\psi, \phi) &= f(p(\psi, \phi))
\end{align}
where
\begin{align}\label{entropy}
	f(p) & = - \frac{1+p}{2} \log \frac{1+p}{2} 
	- \frac{1-p}{2} \log \frac{1-p}{2}.
\end{align}
Note that $f$ restricted to the domain $[0,1]$, the domain on which probability is defined, is invertible. We have thus proved the proposition.
\end{proof}

Having shown the connection between superposition and non-unique decompositions of mixed states, and between the Born rule and the entropy of mixtures, we can now use these to justify the physicality of the complex inner product space structure for quantum states.
\begin{prop}
Complex inner product spaces are physical when used to represent the state space of quantum systems.
\end{prop}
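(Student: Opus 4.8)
The plan is to assemble the three structural results already established into a single physical justification, following the paper's own standard: the complex inner product space structure counts as physical precisely when it is \emph{needed} to capture, and captures \emph{only}, a physically grounded aspect of quantum systems. The aspect I would take as primitive is the preparation of statistical ensembles. Unlike superposition, whose ontological status is interpretation-dependent, and unlike the Born rule read as a measurement transition, the preparation of a mixture is an operational fact about the lab: a mixture is the output of a stochastic preparation procedure, and any physical theory must be able to say when two such procedures yield the same ensemble and what entropy that ensemble carries. I would therefore fix the physical requirement to be twofold: (i) characterize the convex set of preparable ensembles, including the non-unique ways a given ensemble decomposes into pure preparations; and (ii) assign to each ensemble its entropy.

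First I would invoke Proposition \ref{prop_superpositionIsDecomposition} to identify requirement (i) with the superposition structure: a pure state is a linear combination of others exactly when the corresponding ensembles admit a shared, differently-decomposable mixture. This shows that the convex geometry of ensembles already encodes linear combination, and hence demands a vector space structure on the pure states. By Proposition \ref{vector_insufficient}, however, the bare vector space does not fix the coefficients of such a decomposition, so it cannot by itself carry the physical content of a superposition; something more is required. Next I would invoke Proposition \ref{prop_innerProductIsEntropy} to supply exactly what is missing. Since the entropy of an equal mixture is an invertible function $f$ of $|\langle\psi|\phi\rangle|^2$, specifying requirement (ii) for all pairs of pure states is equivalent to specifying the modulus of the inner product. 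Combining this with the interference structure forced by requirement (i) --- relative phases surface as interference terms when a superposition is expanded, and consistency of the shared-mixture decompositions then constrains them --- recovers the full inner product up to an overall absolute phase. Because an absolute phase affects no ensemble and no entropy, it carries no physical content, which is precisely the redundancy encoded by passing to rays. I would conclude that the two experimentally grounded requirements jointly necessitate the complex inner product space structure and nothing beyond it modulo absolute phase.

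The main obstacle I anticipate is the ``only capture'' half of the standard, that is, genuine necessity rather than mere compatibility. Proposition \ref{vector_insufficient} rules out the vector space alone and Proposition \ref{prop_innerProductIsEntropy} shows the inner product suffices, but a fully rigorous argument would need to exclude any \emph{alternative} structure reproducing the same convex geometry together with the same entropy assignment; here one leans on the rigidity of the convex body of quantum ensembles equipped with its entropy functional, the Bloch-sphere computation in Proposition \ref{prop_innerProductIsEntropy} being the two-dimensional core of that rigidity. A secondary subtlety is that Proposition \ref{prop_innerProductIsEntropy} recovers only the modulus $|\langle\psi|\phi\rangle|$, so care is needed to argue that relative phases are not free parameters but are fixed by requiring the superposition-as-decomposition correspondence of Proposition \ref{prop_superpositionIsDecomposition} to hold consistently across all shared mixtures. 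Once these two points are secured, the structure is shown to be both necessitated by physics and free of unphysical surplus, in deliberate contrast with completeness, treated in the next subsection.
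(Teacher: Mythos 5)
Your proposal follows essentially the same route as the paper's own justification: ground the requirement in the operational preparation of statistical ensembles, use Proposition~\ref{prop_superpositionIsDecomposition} to tie the non-unique decomposition of mixtures to superposition (with Proposition~\ref{vector_insufficient} showing the bare vector space is insufficient), and use Proposition~\ref{prop_innerProductIsEntropy} to tie the entropy of equal mixtures to the inner product modulus. Your added care about the ``only capture'' direction and the recovery of relative phases is a refinement the paper leaves implicit, but it does not change the argument's structure.
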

\begin{justification}
Since we can physically prepare statistical ensembles, a physical theory must be able to characterize them. Part of that characterization includes keeping track of which particular ensemble is obtained by mixing particular states in particular proportions. A defining feature of quantum mixed states is that, unlike classical mixtures, they do not provide a unique decomposition in terms of pure states. This non-unique decomposition, as shown by Proposition \ref{prop_superpositionIsDecomposition}, is equivalent to the ability to create superpositions of pure states. Therefore, the inner product structure is the exact structure needed to characterize quantum ensembles and their properties, which is required by the physics. Note that the complexity of the inner product space influences how many decompositions are equivalent to the same mixture, and therefore is an expression of a physical requirement.

Additionally, a physical theory must be able to define the entropy of each ensemble, as required by thermodynamics. In quantum mechanics, the thermodynamic entropy is recovered by the von Neumann entropy, which Proposition \ref{prop_innerProductIsEntropy} shows is exactly captured by the inner product. Therefore, once again, the inner product structure is required to capture a necessary physical requirement.
\end{justification}

\section{The unphysicality of completeness}

To show that the completeness requirement on complex inner product spaces that represent quantum states is unphysical, we need to show that this definition forces the state space to include elements that cannot have physical correspondents.

The Hellinger–Toeplitz theorem states that a self-adjoint operator $O$ defined on the whole Hilbert space $\mathcal{H}$ is necessarily bounded. Therefore, an unbounded operator $X$ cannot be defined on the whole space, meaning that the operator must diverge or be ill-defined on a subset of $\mathcal{H}$. To see that the problem is the completeness requirement of the Hilbert space, we prove the following:
\begin{prop}
Let $\mathcal{H}$ be a Hilbert space and $X : D(X) \to Y$ an unbounded operator where $D(X)$ is the domain of $X$, and $Y$ is another Hilbert space. Then $D(X)$ is an inner product space that is not complete.
\end{prop}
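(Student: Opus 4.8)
The plan is to treat the two assertions separately. That $D(X)$ is an inner product space needs no work: the domain of any linear operator is by definition a linear subspace of $\mathcal{H}$, and a subspace inherits the ambient inner product, so $D(X)$ is automatically a pre-Hilbert space. All the content lies in non-completeness, and for this I would invoke the standard fact that a subspace of a Hilbert space is complete in the inherited norm if and only if it is closed. It therefore suffices to show that $D(X)$ is not closed in $\mathcal{H}$.

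The main step is a proof by contradiction through the closed graph theorem. Suppose $D(X)$ were complete; then it is closed in $\mathcal{H}$ and hence a Banach space in its own right. Since the observables in question are self-adjoint, $X$ is a closed operator, so its graph is closed in $\mathcal{H}\times Y$ and a fortiori in $D(X)\times Y$. Applying the closed graph theorem to the everywhere-defined map $X:D(X)\to Y$ between Banach spaces then forces $X$ to be bounded, contradicting unboundedness. Equivalently, one can phrase this through norms: the graph norm $\|u\|+\|Xu\|$ already makes $D(X)$ complete, and if the coarser inherited norm were complete too, the open mapping theorem would make the two norms equivalent, i.e.\ $\|Xu\|\le C\|u\|$, once again boundedness. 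For the self-adjoint case there is an even shorter route tying back to the Hellinger--Toeplitz theorem cited above: a self-adjoint operator is densely defined, so a closed $D(X)$ would satisfy $D(X)=\overline{D(X)}=\mathcal{H}$, whereupon Hellinger--Toeplitz delivers boundedness directly.

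The step I expect to be the genuine obstacle is not any computation but the hypothesis on $X$ itself. As literally stated the proposition is false for arbitrary unbounded operators: using a Hamel basis one can manufacture everywhere-defined unbounded linear maps, for which $D(X)=\mathcal{H}$ is trivially complete. The argument only goes through once $X$ is required to be closed (equivalently closable), which is precisely the feature that lets one replace ``$X$ is defined at the limit'' by ``the limit lies in the graph.'' This closedness is automatic for the self-adjoint observables the paper cares about but is not recorded in the stated hypotheses, so the crux of turning the sketch into a correct proof is to make that assumption explicit and justify it on the physical grounds that observables are self-adjoint, and hence closed.
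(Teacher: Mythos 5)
Your proposal is correct and, in its shortest variant, coincides with what the paper actually does: the paper's entire argument for non-completeness is the single sentence ``It cannot be complete by the converse of the Hellinger--Toeplitz theorem.'' What you add is precisely the chain of steps that sentence suppresses: a subspace is complete in the inherited norm iff it is closed; a self-adjoint operator is densely defined, so a closed domain would equal all of $\mathcal{H}$; and only then does Hellinger--Toeplitz apply. Your primary route via the closed graph theorem is strictly more general, since it also covers the case the proposition literally states --- $X:D(X)\to Y$ with $Y$ a \emph{different} Hilbert space --- where Hellinger--Toeplitz (a statement about symmetric operators on a single space) does not directly apply; the paper glosses over this. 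Most importantly, your diagnosis of the hypothesis is right and applies to the paper's own proof as well: for an arbitrary unbounded linear operator the claim is false (a Hamel-basis construction gives an everywhere-defined unbounded map with $D(X)=\mathcal{H}$ complete), so the proposition needs $X$ closed, or symmetric and densely defined, for any of these arguments to go through. The paper implicitly assumes this by having introduced Hellinger--Toeplitz for self-adjoint observables in the surrounding text, but it is not recorded in the statement; your proposal is the more honest version of the same proof.
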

\begin{proof}
The domain $D(X)$ of the operator is a subspace of $\mathcal{H}$. In fact, if $||X(v_1)||_Y$ and $||X(v_2)||_Y$ are finite, so is $||X(a_1 v_1+a_2v_2)||_Y$. Therefore $D(X)$ is an inner product space. It cannot be complete by the converse of the Hellinger–Toeplitz theorem.
\end{proof}

To understand the physical significance, consider the position operator $X : D(X) \to \mathcal{H}$. If we take a normalized vector $\psi \in \mathcal{H}$, the norm $||X\psi|| = \<X\psi|X\psi\> = E[X^2|\psi]$ is equal to the expectation of the position squared. The operator is unbounded because the position can be arbitrarily large. We can therefore construct a Cauchy sequence $\{\psi_i\}$ such that $\lim\limits_{i \to \infty}||X\psi|| = +\infty$, that is, we can take a sequence of vectors for which the expectation of position is larger and larger. Because of completeness, the limit of the Cauchy sequence will be in the Hilbert space. However, because the expectation diverges, the limit cannot be in the domain of $X$. Completeness forces us to include ``states'' for which the position is ill-defined.

The matter would still be manageable if we had a way to keep these degenerate cases isolated, but this is not the case. In general, a unitary operator can move vectors in and out of the domain.
\begin{prop}
Let $\mathcal{H}$ be a Hilbert space, $X : D(X) \to Y$ an unbounded operator, and $U : \mathcal{H} \to \mathcal{H}$ a unitary operator. Then, in general, $U(D(X)) \notin D(X)$.
\end{prop}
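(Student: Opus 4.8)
The plan is to read the statement as an existence claim: the phrase ``in general'' signals that we need only exhibit one triple $(\mathcal{H}, X, U)$ for which $U(D(X)) \not\subseteq D(X)$, rather than prove failure of containment for every unitary. The natural choice is to reuse the explicit construction already carried out in the main text. I would take $\mathcal{H} = L^2(\mathbb{R})$, let $X$ be the position operator with $D(X) = \{f \in L^2(\mathbb{R}) : \int x^2 |f(x)|^2\, dx < \infty\}$, and let $U$ be the unitary induced by the change of variables $y(x) = \tan\!\left(\frac{\pi}{2}\erf(x)\right)$.

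First I would pin down the two states. The Gaussian $\psi(x) = \sqrt{e^{-x^2}/\sqrt{\pi}}$ has $E[X^2|\psi] = \|X\psi\|^2 < \infty$ (indeed all moments are finite), so $\psi \in D(X)$. The Cauchy state $\phi(x) = \sqrt{1/(\pi(x^2+1))}$ has $|\phi(x)|^2$ falling off as $1/x^2$, so $E[X^2|\phi] = \int x^2 |\phi(x)|^2\, dx$ diverges and $\phi \notin D(X)$.

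Next I would make precise the unitary attached to the change of variables. A smooth strictly increasing bijection $y = y(x)$ induces the operator $(Uf)(y) = f(x(y))\sqrt{x'(y)}$ on $L^2(\mathbb{R})$; the Jacobian factor $\sqrt{x'(y)}$ is exactly what makes $\|Uf\| = \|f\|$ under the substitution $x = x(y)$, so $U$ is unitary. The defining equation for $y(x)$ in the main text, $\int_0^{y(x)} \phi^\dagger(\hat y)\phi(\hat y)\, d\hat y = \int_0^x \psi^\dagger(\hat x)\psi(\hat x)\, d\hat x$, states (upon differentiating in $x$) precisely that $|\phi(y(x))|^2\, y'(x) = |\psi(x)|^2$, i.e.\ that the pushforward of the density $|\psi|^2$ under $x \mapsto y(x)$ equals $|\phi|^2$. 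Choosing the positive branch of the square root, this is equivalent to $U\psi = \phi$. With this in hand, $\psi \in D(X)$ while $U\psi = \phi \notin D(X)$, so $U(D(X)) \not\subseteq D(X)$, which is the claim.

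The main obstacle I anticipate is bookkeeping around the unitary rather than anything conceptual: one must verify that the Jacobian weight simultaneously makes $U$ norm-preserving and realizes $U\psi = \phi$ as an identity of vectors, not merely $|U\psi|^2 = |\phi|^2$ up to an $x$-dependent phase. Since the change of variables fixes only the pushforward of the modulus squared, some care is needed to confirm that a genuine unitary implementing the stated state mapping exists; the positive-branch square root together with the monotonicity of $y(x)$ are what remove the phase ambiguity and guarantee this. Everything else reduces to the elementary integrals already displayed in the main text.
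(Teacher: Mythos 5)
Your proof is correct, but it takes a genuinely different and strictly less general route than the paper's. The paper's proof is abstract: it picks \emph{any} normalized $\psi \in D(X)$ and \emph{any} normalized $\phi \notin D(X)$ (the latter exists because, by the preceding proposition via Hellinger--Toeplitz, $D(X)$ cannot be all of $\mathcal{H}$), and then constructs $U$ as the rotation in the two-dimensional subspace spanned by $\psi$ and $\phi$ that fixes the orthogonal complement and sends $\psi \mapsto \phi$. That argument works for every Hilbert space and every unbounded operator, so it establishes the stronger reading of ``in general'': for \emph{each} such $(\mathcal{H}, X)$ there is a unitary that moves a vector out of the domain. You instead read the proposition as a bare existence claim and exhibit one concrete triple --- $L^2(\mathbb{R})$, the position operator, and the change-of-variables unitary $y(x) = \tan\bigl(\tfrac{\pi}{2}\erf(x)\bigr)$ --- and your verification is sound: $y$ is a smooth strictly increasing bijection of $\mathbb{R}$ with $y'(x) = \sqrt{\pi}\, e^{-x^2} \sec^2\bigl(\tfrac{\pi}{2}\erf(x)\bigr) > 0$, the Jacobian-weighted map is unitary, and your observation that the positive branch of the square root together with the positivity of $\psi$, $\phi$, and $x'(y)$ pins down $U\psi = \phi$ as an identity of vectors (not just of densities) is exactly the point that needed care. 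What your approach buys is concreteness and physical content --- it is essentially the worked example the appendix develops immediately after this proposition, showing that a physically innocuous-looking coordinate change already does the damage --- at the cost of proving a weaker statement than the paper does. If you want your version to match the paper's generality, you would need to add the abstract rotation argument (or note that your example can be transported to any $(\mathcal{H}, X)$ by the same two-dimensional rotation trick).
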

\begin{proof}
Let $\psi, \phi \in \mathcal{H}$ be two normalized vectors. Then we can construct a unitary operator $U$ such that $\phi = U(\psi)$. For example, we can take the operator that performs a rotation in the subspace identified by $\psi$ and $\phi$, leaving the orthogonal subspace unchanged. We can now take $\psi$ to be in $D(X)$ while $\phi$ to not be in $D(X)$. Therefore, in general, a unitary operator can move vectors in and out of the domain of an unbounded operator.
\end{proof}

To understand the physical significance, note that unitary operators are isomorphisms between Hilbert spaces. That is, mathematically, two Hilbert spaces are considered indistinguishable if there exists a unitary transformation between them. As we saw, unitary operations can change the domain in which unbounded operators are well defined: they can map states for which position is well defined onto states for which it is not. While the mathematical structure remains unchanged, the physical significance hasn't.

More concretely, a change of variable $x$ to $y$ induces a unitary operator as it is just a change of representation. The transformation rules for wave functions simply follow by the fact that $\psi^\dagger(x) \psi(x) = \rho(x)$ gives us a probability density over $x$. Therefore we must have
\begin{equation}
	\begin{aligned}
		\rho(x) dx &= \rho(y) dy \\
		\rho(y) &= \rho(x) \frac{dx}{dy} \\
		\psi^\dagger(y) \psi(y) &= \psi^\dagger(x) \psi(x) \frac{dx}{dy}.
	\end{aligned}
\end{equation}
Therefore we can set 
\begin{equation}
	\begin{aligned}
		\psi(y) &= \psi(x(y)) \sqrt{\frac{dx}{dy}}.
	\end{aligned}
\end{equation}
The above transformation is unitary.

\begin{prop}
	Let $\mathcal{H} = L^2(\mathbb{R})$ be the space of square integrable functions. Let $f : \mathbb{R} \to \mathbb{R}$ be an orientation-preserving diffeomorphism. Then the map $U : \mathcal{H} \to \mathcal{H}$ such that 
	$\phi(y) = U(\psi(x)) \mapsto \psi(f^{-1}(y)) \sqrt{\left|\frac{df}{dx}\right|^{-1}}$ is a unitary transformation.
\end{prop}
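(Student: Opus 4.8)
The plan is to prove that $U$ is unitary by checking the three defining features of a unitary map on a Hilbert space: that it is linear, that it preserves the inner product, and that it is surjective. For a linear map, being a bijective isometry is equivalent to being unitary, so these checks suffice. Linearity is immediate, since, writing $x = f^{-1}(y)$, the map acts as $(U\psi)(y) = \psi(f^{-1}(y))\sqrt{|f'(f^{-1}(y))|^{-1}}$, i.e.\ composition with the fixed function $f^{-1}$ followed by multiplication by the fixed weight $\sqrt{|f'(f^{-1}(y))|^{-1}}$; neither operation depends on $\psi$, so $U(a_1\psi_1 + a_2\psi_2) = a_1 U\psi_1 + a_2 U\psi_2$. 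As a bonus, the norm computation below will simultaneously confirm that $U$ is well defined, i.e.\ that it maps $L^2(\mathbb{R})$ into itself rather than out of it.

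The heart of the argument is the isometry, which I would establish by a single change of variables. We have
\begin{equation}
\|U\psi\|^2 = \int_{\mathbb{R}} |\psi(f^{-1}(y))|^2\, |f'(f^{-1}(y))|^{-1}\, dy .
\end{equation}
Substituting $y = f(x)$, so that $dy = f'(x)\,dx$, and using that $f$ is an orientation-preserving diffeomorphism of $\mathbb{R}$ (hence $f'(x) > 0$ and $y$ sweeps all of $\mathbb{R}$ as $x$ does), the Jacobian weight cancels:
\begin{equation}
\|U\psi\|^2 = \int_{\mathbb{R}} |\psi(x)|^2\, |f'(x)|^{-1}\, f'(x)\, dx = \int_{\mathbb{R}} |\psi(x)|^2\, dx = \|\psi\|^2 .
\end{equation}
This shows both that $U\psi \in L^2(\mathbb{R})$ whenever $\psi \in L^2(\mathbb{R})$ and, by polarization, that $U$ preserves the full inner product.

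Then I would establish surjectivity by exhibiting the inverse explicitly, rather than by an abstract argument. The natural candidate is the map associated with the inverse diffeomorphism $f^{-1}$, namely $(U^{-1}\phi)(x) = \phi(f(x))\sqrt{|f'(x)|}$. A direct substitution shows $U^{-1}U = UU^{-1} = \mathrm{id}$: in each composition the inner and outer arguments collapse via $f^{-1}\circ f = \mathrm{id}$, and the two Jacobian factors multiply to one. Since a linear isometry possessing a two-sided inverse is unitary, this completes the proof.

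Regarding the main obstacle, the whole argument reduces to one change-of-variables computation, so the only genuine care required is bookkeeping: tracking where the Jacobian $f'$ is evaluated (at $x = f^{-1}(y)$ before substitution, at $x$ after) and confirming that the substitution is valid globally over all of $\mathbb{R}$. This last point is precisely where the hypothesis that $f$ is an orientation-preserving diffeomorphism is indispensable, as it guarantees $f' > 0$ (so the weights are real and positive and the absolute values are harmless) and that $f$ is a bijection of $\mathbb{R}$ onto $\mathbb{R}$ (so the transformed integral again ranges over the entire line). Were $f$ merely a local diffeomorphism or not surjective, the domain of the transformed integral would shrink and the isometry would fail.
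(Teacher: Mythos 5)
Your proof is correct and its core is the same change-of-variables computation the paper uses to show that $U$ preserves the $L^2$ norm. In fact your version is more complete than the paper's: the paper stops after verifying norm preservation, which only establishes that $U$ is an isometry, whereas unitarity on an infinite-dimensional Hilbert space also requires surjectivity (compare the shift operator, a non-unitary isometry), and you supply this by exhibiting the explicit inverse $(U^{-1}\phi)(x) = \phi(f(x))\sqrt{|f'(x)|}$ built from $f^{-1}$.
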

\begin{proof}
	Note that $f$ models a change of variables, therefore, in physics notation, $y = y(x) = f(x)$ and $x = x(y) = f^{-1}(y)$. Also note that both variables are real, therefore the Jacobian $\frac{df}{dx}$, which exists and  is nonzero because $f$ is a diffeomorphism, is real valued. Therefore $\left(\frac{df}{dx}\right)^\dagger = \frac{df}{dx}$. We have:
	\begin{equation}
		\begin{aligned}
			\int \phi^\dagger(y) \phi(y) dy &= \int U(\psi(x))^\dagger U(\psi(x)) dy \\
			 = \int \psi^\dagger(f^{-1}(y)) &\sqrt{\left|\frac{df}{dx}\right|^{-1}}\psi(f^{-1}(y)) \sqrt{\left|\frac{df}{dx}\right|^{-1}} dy \\
			 &= \int \psi^\dagger(f^{-1}(y)) \psi(f^{-1}(y)) \left|\frac{df}{dx}\right|^{-1} dy \\
			 &= \int \psi^\dagger(x(y)) \psi(x(y)) \frac{dx}{dy} dy \\
			 &= \int \psi^\dagger(x) \psi(x) dx.
		\end{aligned}
	\end{equation}
	The map $U$ preserved the inner product and therefore is a unitary transformation.
\end{proof}

The issue is that the position operator $X$ and the position operator $Y$ are different operators, and the domain of $X$ will be in general different from the domain of $Y$. Let $y=y(x)$ be our change of variable such that $y(0) = 0$. If $\psi(x)$ is a wave function expressed over $x$, and $\phi(y)$ represents the same state as a wave function over $y$, we have the following relationship:
\begin{equation}
\int_{0}^{y(x)} \phi^\dagger(y) \phi(y) dy = \int_{0}^{x} \psi^\dagger(x) \psi(x) dx
\end{equation}
To find a coordinate transformation that takes finite expectations to infinite expectations, we set the initial and final wave functions as follows:
\begin{align}
\psi(x) &= \sqrt{\frac{e^{-x^2}}{\sqrt{\pi}}} \\
\phi(y) &= \frac{1}{\sqrt{\pi(y^2 + 1)}}.
\end{align}
The first distribution is a Gaussian; therefore, all moments of the distribution, the expectation for $x^n$, are finite. The second goes to infinity like $\frac{1}{x^2}$, so the expectation will diverge for all moments above the first. We find
\begin{equation}
\begin{aligned}
\int_{0}^{y(x)} \frac{1}{\pi(\hat{y}^2 + 1)} d\hat{y} &= \int_{0}^{x} \frac{e^{-\hat{x}^2}}{\sqrt{\pi}} d\hat{x} \\
\frac{\tan^{-1}(y(x))}{\pi} &= \frac{\erf(x)}{2} \\
y(x) &= \tan \left(\frac{\pi}{2}\erf(x)\right). \\
\end{aligned}
\end{equation}
The graph of the above function increases so fast that, when plotted, it looks like it has a vertical asymptote. Nonetheless, it is an invertible function over the whole domain and will induce a unitary transformation on the Hilbert space.\footnote{Note that we could implement a similar coordinate transformation in classical mechanics, which would also change finite moments of a distribution into infinite ones. It may be that, like differentiable manifolds are constrained to only differentiable coordinate transformations, we need ``statistical manifolds'' that are constrained to coordinate transformations that preserve the finiteness of moments of distributions. This is a topic that requires further analysis.}

Unitary operators are also used in quantum theory to represent time evolution. Typically, we have a family of unitary operators $\{U_t\}_{t \in \mathbb{R}}$ that is strongly continuous (i.e. $\lim_{t \to t_0} U_t = U_{t_0}$) and respect composition in time (i.e. $U_t U_s = U_{t+s}$). We can turn the above transformation into an evolution by setting
\begin{equation}
z(x, t) = \cos(\omega t) x + \sin(\omega t) \tan \left(\frac{\pi}{2}\erf(x)\right).
\end{equation}
The idea is that the evolution will simply stretch and shrink the wave function intermittently, potentially oscillating expectation values from finite to infinite and vice versa in finite time. Clearly, this is not a physically meaningful time evolution, yet it will correspond to a strongly continuous one-parameter unitary group and, by Stone's theorem, will even admit a Hamiltonian.

The mathematician that points out that this is all known and there is nothing new is missing the point. The physicists that note that ``yes, those elements are clearly unphysical, but I don't use them, so I do not care,'' are also missing the point. The point is that, physically, we will want to make general statements such as ``for all observers, X is valid'' or ``under all evolutions, Y is valid.'' Given that Hilbert spaces mix clearly physically meaningful objects with clearly unphysical ones, we have no way to make those statements precisely.

We can conclude the discussion with the following:
\begin{prop}
Infinite-dimensional complex Hilbert spaces are unphysical when used to represent the state space of quantum systems due to completeness.
\end{prop}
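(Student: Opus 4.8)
The plan is to treat this as a synthesis proposition whose justification assembles the three mathematical facts already established in this section into a single physical verdict. Unlike the earlier propositions, there is no new computation to perform; the work lies in articulating the physical premise precisely and then showing that completeness is the exact feature responsible for dragging in unphysical states. I would therefore structure the argument as a physical justification rather than a bare mathematical deduction, in parallel with the justification given for the physicality of the inner product space.

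First I would state the physical premise as sharply as possible: an observable such as position is assumed to take arbitrarily large finite values, which means the state space must contain, for every finite value, some state whose expectation realizes it, but must not contain any state for which the expectation is literally infinite or ill-defined. Next I would invoke the converse of the Hellinger--Toeplitz theorem, already established above, to note that the domain $D(X)$ of the unbounded position operator is a genuine inner product space that fails to be complete. The key observation is that the completion of $D(X)$ is precisely $\mathcal{H}$, and among the vectors adjoined in passing to that completion are the limits of Cauchy sequences $\{\psi_i\}$ with $\lim_i \|X\psi_i\| = +\infty$ --- states on which the position expectation diverges. Completeness, and nothing else, is therefore what forces these states into the space.

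The second step is to rule out the natural objection that the pathological states could simply be quarantined and ignored. Here I would appeal to the proposition that a unitary operator can map vectors into and out of $D(X)$, together with the explicit change of variable $y(x) = \tan\!\left(\frac{\pi}{2}\erf(x)\right)$, which carries a Gaussian with all moments finite onto a distribution whose moments above the first diverge, and to its promotion to the strongly continuous one-parameter group generated by $z(x,t) = \cos(\omega t)\,x + \sin(\omega t)\tan\!\left(\frac{\pi}{2}\erf(x)\right)$. Since coordinate changes and time evolutions are exactly the maps the theory must treat as legitimate, and since these maps do not preserve the physical region, there is no unitarily invariant way to excise the unphysical states while retaining the complete structure. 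This closes off the ``bigger space is harmless'' escape.

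The hard part will not be any single calculation --- those are already in hand --- but the logical discipline of pinning the defect specifically on completeness rather than on the ambient choice of the $L^2$ norm or on the mere use of unbounded operators. To make the attribution tight I would lean on the fact, justified earlier, that the inner product structure is independently physical, so that the only remaining axiom available to blame is completeness; and I would emphasize that dropping completeness removes exactly the adjoined divergent-expectation limits while leaving every finite-expectation state, and every superposition of them, untouched. Establishing that completeness is thus both necessary and sufficient for the pathology --- the precise culprit, neither over- nor under-identified --- is the delicate step on which the force of the proposition rests.
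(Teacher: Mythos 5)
Your proposal matches the paper's own justification in both substance and structure: it rests on the converse of the Hellinger--Toeplitz theorem to show completeness forces in states with divergent expectations, then uses the unitary coordinate change $y(x)=\tan\left(\frac{\pi}{2}\erf(x)\right)$ and its one-parameter extension to block the ``just ignore the bad states'' escape, concluding that well-definedness of statistical properties fails to be covariant or preserved under time evolution. The only cosmetic difference is that the paper opens by explicitly noting that finite-dimensional Hilbert spaces are physical (completeness being automatic there), which you achieve implicitly by isolating completeness as the sole remaining culprit once the inner product structure has been independently justified.
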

\begin{justification}
If a finite-dimensional space models a quantum system, given that every finite-dimensional space is complete, finite-dimensional complex Hilbert spaces are physical.

Some physical systems are described by unbounded discrete quantities (e.g.~number of particles) or continuous quantities (e.g.~position), which require infinite-dimensional spaces. Infinite-dimensional complex inner product spaces, then, are needed to represent quantum systems.

An unbounded operator cannot be defined on all the elements of a Hilbert space by the converse of the Hellinger-Toeplitz theorem. This means that if one represents states as rays on a Hilbert space and observables as operators, the unbounded observable will not be well defined, not even as a statistical quantity, on all states. In particular, whether the statistical properties of an observable are well defined, or whether it is possible to identify a state through tomography experimentally, would be coordinate/representation dependent. Moreover, whether the expectation of an observable is finite or not would be something that can change in finite time.

On physical grounds, we must have that an observable has well-defined statistical properties over all states, that this property is covariant, and that it is preserved by time evolution. Hilbert spaces do not satisfy this requirement and are, therefore, unphysical.
\end{justification}

\section{The unphysicality of rigged Hilbert spaces}

As part of the feedback we received on previous drafts, some people argued that rigged Hilbert spaces were developed to solve exactly the problems we highlight. They were not. They are designed to solve another problem, namely extending the spectral theorem to the case of continuous spectra~\cite{Madrid_2005, reed_methods_1980, Landsman2017}.

First of all, note that ``rigged'' in this context does not mean ``manipulated by dishonest intention'', but rather ``fully equipped and ready to go'' as, for example, a fully-rigged ship. Therefore, the name already tells us that rigged Hilbert spaces are Hilbert spaces with additional structure. They still satisfy completeness, and thus are affected by the problems presented above. The unphysicality of a mathematical structure cannot simply be solved by adding more structure: the unphysical elements must be removed.

Rigged Hilbert spaces exist to address the fact that an operator may have eigenfunctions that are not in the Hilbert space. For example, the eigenstates for the operator $P = - \imath \hbar \partial_x$ are of the form $e^{\frac{xp}{-\imath\hbar}}$, and are not square integrable. This is due to the continuous spectrum, not the unboundedness of the operator, and is therefore an independent problem. We can show that with two examples.

Bounded operators can have a continuous spectrum. For example, let $\mathcal{H}$ be the Hilbert space of all wave functions defined on the compact domain $x \in [0,1]$. The position operator, in this case, is bounded. We do not have problems of infinite expectation value. However, the position operator $X$ does not have eigenfunctions in $\mathcal{H}$. The Dirac delta distribution, in fact, is not part of the Hilbert space; a sequence of Gaussians with decreasing variance, whose limit is a Dirac delta, is not a Cauchy sequence. Therefore we cannot properly talk about eigenstates of $X$ in $\mathcal{H}$.

Unbounded operators can have a discrete spectrum. For example, the harmonic oscillator has a discrete spectrum, with eigenfunctions given by the Hermite polynomials. All eigenfunctions are in the Hilbert space.

Let us give a very rough description of the rigged Hilbert space construction. Given a generic Hilbert space $\mathcal{H}$, we take an operator $X$, which we assume unbounded. The domain $D(X)$ of the operator is a strict subset of $\mathcal{H}$. That is, $D(X) \subset \mathcal{H}$. We can then define $D(X)^*$ as the space of linear functionals of $D(X)$.\footnote{A similar construction can be done with anti-linear functionals, giving the anti-dual space.} Mathematically, $D(X)$ is the space of test functions while $D(X)^*$ is the space of tempered distributions, the dual of $D(X)$. This is a strict superset of $\mathcal{H}$. That is,
\begin{equation}
	D(X) \subset \mathcal{H} \subset D(X)^*.
\end{equation}
This is called the Gelfand triple. This construction allows us to talk about $\delta(x)$, as it is a tempered distribution, and therefore it can be found in $D(X)^*$. Note that if we chose a different operator, say $P$, we would have a different construction and a different rigged Hilbert space.

The point of showing the construction is to see that the rigged Hilbert space is a Hilbert space plus two other spaces. Therefore, a rigged Hilbert space is physical only if all three spaces are physical. If the Hilbert space is not physical, neither is the rigged Hilbert space.

\section{The possible physicality of Schwartz spaces}

We have found that:
\begin{enumerate}
\item complex inner product spaces are needed
\item infinite-dimensional spaces are needed
\item completeness is unphysical.
\end{enumerate}
In other words, we need a stricter way to characterize convergence within the state space so that we can manage the infinite dimensionality. Conceptually, we need to treat the infinities coming from unbounded quantities as potential infinities, not actual infinities: we never actually have infinitely many particles or a system positioned at infinity; we have an infinite range of possible values.

A naive idea would be to require that the expectation value for all possible variables converges. Unfortunately, this doesn't work. Suppose we have a distribution $\rho(x)$ over position $x$. We would require the expectation of any function to be finite, which would include $e^x$, $e^{e^x}$, $e^{e^{e^x}}$, and so on. The only way this can work is if the function has compact support. Physically, this seems too restrictive as it excludes functions like normal distributions. At a closer investigation, the Paley-Wiener theorem~\cite{fourierCompact} tells us that the Fourier transform of a function with compact support is an entire function, and an entire function does not have compact support~\cite{markushevich2014entire}. Therefore, we cannot have compact support in both position and momentum.\footnote{This is not true for distributions in classical mechanics, as position and momentum are different variables, not linked by the Fourier transform. Therefore the space of distributions with compact support on phase space gives us finite expectation value for all functions of position and momentum.}

The only way out, then, is to require only a certain class of observables to have finite expectation. Physically, it would make sense to require exactly those observables needed to fully reconstruct the state through quantum tomography. A good set in that respect seems to be the set of polynomials of conjugate quantities, like position and momentum. In quantum mechanics, this means requiring all expectations of $X^nP^m = -\imath \hbar x^n\partial_x ^m$ to be finite. This restricts our space to those functions that are infinitely smooth (i.e.~all derivatives exist) and decrease very rapidly as $x$ goes to infinity: the Schwartz functions. Requiring finite expectation for all polynomials of position and momentum, then, does not give us the usual $L^2$ Hilbert space; it gives us the Schwartz space.

Not only do the expectations of all polynomials exist, but they are enough to identify the elements of the space and define limits. The Schwartz space is, in fact, a Fréchet space~\cite[Theorem V.9]{reed_methods_1980}, meaning that it is a Hausdorff space, with the topology induced by the seminorms given by the expectations, and it is complete with respect to those seminorms. Physically, this means that measurements on the polynomials (i.e.~the topology of the seminorms) are enough to identify the states (i.e.~the singletons, which are closed sets as per Hausdorff).\footnote{This essentially tells us that the moment problem restricted to pure states has a unique solution in quantum mechanics. Setting up a comparison with the classical case is delicate. If we restrict ourselves to classical pure states, i.e.~points in phase space, then the solution is also unique. If we restrict ourselves to compact support, distinct distributions have distinct moments even in the multi-variable case~\cite{moment_problem_2017}. If we restrict ourselves to integrable functions, even in the single variable case, distinct distributions can have the same moments. None of these comparisons seems adequate as a quantum state is a statistical object by nature, but it is not an arbitrary mixture. A more proper comparison would seem to restrict the problem to the space of integrable functions with zero entropy.} Moreover, if we have a sequence of states for which all expectations converge, we are guaranteed to converge to another state (i.e.~completeness with respect to the seminorms). If any expectation diverges, the limit is not in the space. This criterion of convergence seems a lot more physically sound.

To compare Schwartz and Hilbert spaces, the Schwartz space is a subspace of the $L^2$ Hilbert space taken as an inner product space. That is, all Schwartz functions are elements of the Hilbert space: they are closed under linear combinations and have the same inner product. It is not complete, and it is a dense subspace, meaning that any element of the Hilbert space can always be approximated by a Schwartz function with an arbitrary level of precision. Additionally, once a map is defined over the Schwartz space, the map is uniquely extended over the Hilbert space. It would then seem that all we are losing is the behavior at actual infinity.

Also note that Schwartz spaces over $\mathbb{R}^n$ are topologically different if $n$ is different. This means that Schwartz spaces can distinguish the number of the degrees of freedom at the topological level. This is in line with what happens in classical mechanics, where the phase space $\mathbb{R}^{2n}$ charted by position and momentum is topologically different for different values of $n$. Additionally, the completion of those Schwartz spaces for any $n$ is the same Hilbert space. This means that the completion actually loses physical information about the space, which again shows that the physical content is actually in the Schwartz space.

The Schwartz space also has another important property: it is closed under the Fourier transform. That is, the Fourier transform of a Schwartz function is a Schwartz function. Physically, this means that we can always express a wave function in either the position or the momentum basis. Mathematically, the theory of tempered distributions, which allows us to define objects like the Dirac delta distribution, is built on top of the Schwartz space. Therefore, this space already plays a fundamental role in the definitions of the tools we already use in physics.

If we restrict ourselves to the Schwartz space, then the only unitary transformations we are interested in are those that map Schwartz functions to Schwartz functions. These are special cases of unitary transformations over Hilbert spaces and inherit all their properties, such as the existence of the inverse. The change of coordinate $y(x)$ we defined in the previous section, then, is automatically ruled out. It is interesting to note that two Hilbert spaces connected by such unitary transformations have essentially two separate theories of distributions as they start from two different sets of objects. This shows exactly how two inequivalent physical representations can be mathematically equivalent.

Though we believe the arguments to be thoroughly convincing, we are stopping short of declaring Schwartz spaces physical. First, this is not the goal of the paper, which is to show the unphysicality of Hilbert spaces. Second, we would need to show the importance of expectation values of polynomials from experimental considerations.\footnote{The lack of a clear classical parallel for the moment problem restricted to pure states may make that argument difficult to construct.} It is not clear to us that that's the case, so it may be that the justification would have to proceed in a different, but ultimately equivalent, way.

Additionally, Schwartz spaces can only represent physical systems with a fixed number of degrees of freedom over Euclidean space. It is not clear how to generalize the definition to cover all physical cases.

Now, one may be concerned that if we abandon Hilbert spaces, we would also have to abandon many convenient tools built on top of them. This is not the case. When solving an integral, we sometimes extend the domain from real to complex values, which are unphysical, because complex analysis has nicer mathematical features. The result so found, however, is still a valid result in the real domain. Similarly, we can pose a problem on the Schwartz space, extend to the Hilbert space or the space of distributions for calculation, and then bring the result back to the Schwartz space. What needs to be rethought, instead, are manipulations of the state spaces themselves, especially under infinite operations. For example, if we want to construct the state space for a composite system of arbitrarily many subsystems which are characterized by arbitrarily large observables, understanding exactly which infinities are physical and which aren't is a key issue. Therefore taking the Hilbert space, which we saw doesn't deal with physical infinities correctly, and making infinitely many copies of it is bound to create problems.

\section{A final remark on the mathematical foundations of physical theories}

Once mathematicians and philosophers noticed that mathematical theories allowed one to express inconsistencies (e.g.~Russell's paradox, the Berry paradox, etc.), an effort was put into strengthening the foundations of mathematics so that such paradoxes were either resolved or couldn't be expressed. Concerning the mathematical foundations of physical theories, we have a similar problem: we can write statements that, though mathematically consistent, are physically inconsistent. Looking for and addressing these problems will ultimately lead to a more solid foundation, which, we believe, is the prerequisite to solving other outstanding issues in fundamental physics.

\end{document}